\numberwithin{equation}{section} \pagestyle{myheadings}
\newtheorem{theorem}{Theorem}[section]
\newtheorem{lemma}[theorem]{Lemma}
\newtheorem{proposition}[theorem]{Proposition}
\newtheorem{definition}[theorem]{Definition}
\newtheorem{example}[theorem]{Example}
\newenvironment{proof}{\paragraph{Proof.}}{\hfill $\square$\\}
\newenvironment{proof**}[1]{\paragraph{Proof of #1.}}{\hfill $\square$\\}
\newcommand{\e}{\mathcal{E}}
\newcommand{\la}{\lambda}
\newcommand{\bu}{\boldsymbol{u}}
\newcommand{\hk}{\hslash}
\newcommand{\arrow}{\rightarrow}
\newcommand{\map}{\mapsto}
\newcommand{\bb}[1]{\mathbb{#1}}
\newcommand{\Diff}[3]{\left . \frac{d#1}{d#2}\right |_{#3}}
\newcommand{\alg}{\mathfrak{g}}
\newcommand{\Alg}{\mathcal{A}}
\newcommand{\calg}{\widehat{\mathfrak{g}}}
\newcommand{\ca}{\widetilde{\mathfrak{g}}}
\newcommand{\Si}{\mathbb{S}^1}
\newcommand{\Cm}{\mathbb{C}}
\newcommand{\Z}{\mathbb{Z}}
\newcommand{\C}{\mathcal{C}}
\newcommand{\smf}{\mathcal{C}^\infty}
\newcommand{\Rm}{\mathbb{R}}
\newcommand{\Km}{\mathbb{K}}
\newcommand{\F}{\mathcal{F}}
\newcommand{\V}{\mathcal{V}}
\newcommand{\U}{\mathcal{U}}
\newcommand{\pr}{\partial}
\newcommand{\me}{\geqslant}
\newcommand{\bra}[1]{\left (#1\right )}
\newcommand{\brac}[1]{\left [#1\right ]}
\newcommand{\bc}[1]{[\![#1]\!]}
\newcommand{\pobr}[1]{\left \{#1\right \}}
\newcommand{\dual}[1]{\left \langle #1 \right \rangle}
\newcommand{\var}[2]{\frac{\delta #1}{\delta #2}}
\newcommand{\pd}[2]{\frac{\partial #1}{\partial #2}}
\newcommand{\ad}{{\rm ad}}
\newcommand{\cad}{{\rm \widehat{ad}}}
\newcommand{\res}{{\rm res}}
\newcommand{\free}{{\rm free}}
\newcommand{\pmatrx}[2]{\left(\!\!\begin{array}{#1} #2 \end{array}\!\!\!\right)}
\newcommand{\der}[1]{\frac{d}{d #1}}
\newcommand{\Vect}{\rm Vect}
\newcommand{\uf}{{\bf u}}
\newcommand{\T}{{\rm T}}
\newcommand{\cf}{{\bf c}}
\newcommand{\Tr}{{\rm Tr}}
\newcommand{\tr}{{\rm tr}}
\begin{document}

\markboth{M. B\l aszak and B. M. Szablikowski}{Classical $R$-matrix
theory for bi-Hamiltonian field systems}

\title{\bf Classical $R$-matrix theory for bi-Hamiltonian field systems}

\author{Maciej B\l aszak$^{1,}$\footnote{E-mail: {\tt blaszakm@amu.edu.pl}} and
B\l a\.zej M. Szablikowski$^{2,1,}$\footnote{E-mail: {\tt b.szablikowski@maths.gla.ac.uk}}\\[3mm]
\small $^1$ Department of Physics, Adam Mickiewicz University\\
\small Umultowska 85, 61-614 Pozna\'n, Poland\\[2mm]
\small $^2$ Department of Mathematics, University of Glasgow,\\
\small Glasgow G12 8QW, U.K.}

\date{}

\maketitle

\begin{abstract}
The $R$-matrix formalism for the construction of integrable systems with infinitely many degrees of freedom is reviewed. Its application to Poisson, noncommutative and loop algebras as well
 as central extension procedure are presented. The theory is developed for $(1+1)$-dimensional case where the space variable belongs either to $\Rm$ or to various discrete sets. Then, the extension onto $(2+1)$-dimensional case is made, when the second space variable belongs
to $\Rm$. The formalism presented contains many proofs and important
details to make it self-contained and complete. The general theory is applied to several infinite dimensional Lie algebras in order to construct both dispersionless and dispersive (soliton) integrable field systems.
\end{abstract}

\tableofcontents

\section{Introduction}

One of the most characteristic features of integrable nonlinear
systems is the existence of bi-Hamiltonian structures. This ingenious
concept was introduced by F. Magri \cite{Magri} in 1978. From
the geometrical point of view, this means that there exists a
pair of compatible Poisson tensors which allow, using a
recursion chain, to generate infinite (in the infinite-dimensional
case) hierarchies of commuting symmetries and constants of motion
being in involution with respect to the above Poisson tensors. In
order to stress the importance of the bi-Hamiltonian structures for
evolution systems let us quote L.A.~Dickey\footnote{\cite{Dickey}, page 43.}
\begin{quote}
    "The existence of two compatible Poisson (or Hamiltonian) structures is a
    remarkable feature of the most, if not all, integrable systems, sometimes
    it is considered as the essence of the integrability."
\end{quote}

Finding a systematic method for construction of integrable systems
is one of the most important issues in the theory of evolutionary
systems. It is well known that a very powerful tool called the
classical R-matrix formalism \cite{Semenov} proved to be very
fruitful in the systematic construction of the field and lattice
soliton systems as well as dispersionless systems. The crucial point
of the formalism is the observation that integrable dynamical
systems can be obtained from the Lax equations on appropriate Lie
algebras. The greatest advantage of this formalism, besides the
possibility of systematic construction of the integrable systems, is
the construction of bi-Hamiltonian structures and (infinite)
hierarchies of symmetries and conserved quantities.

The goal of this article is to present the formalism of classical
$R$-matrices applied to infinite-dimensional Lie algebras in order
to construct integrable systems with infinitely many degrees of
freedom and related tensor invariants like conserved functions,
symmetries, Poisson tensors, recursion operators and so on.

In the first part of the paper we present in a systematic fashion
the concept of $R$-matrix formalism with many proofs and important
details to make the text self-contained and complete.
First of all we present the basics of the formalism and then illustrate
its approach to the Poisson and noncommutative algebras, with the aim
of construction of integrable hierarchies with multi-Hamiltonian structures.
This theory is developed for (1+1)-dimensional systems, where one variable is the
evolution parameter $t$ (time), which in presented construction
always belongs to $\Rm$, and the spatial variable, which may belong to
$\Km=\Rm$ or $\Cm$ but also to various sets of discrete numbers, so we cover
the cases like continuous dynamics, lattice dynamics and
$q$-dynamics in a single formalism.
Then, we show how to extend the whole formalism over additional
spatial variable via the so-called central extension approach.
The advantage of this construction is
the possibility, in principle, of finding a (2+1)-dimensional counterpart for an arbitrary (1+1)-dimensional system derived via the $R$-matrix
formalism, with the only limitation that the additional space
coordinate is always continuous (belongs to $\Rm$ or $\Si$)
rather than discrete. Finally we apply the $R$-matrix
formalism to loop algebras together with central extension procedure.

In the second part of the paper we apply the formalism developed in
the first part to several important Lie algebras. Some of them allow
to construct dispersionless integrable systems while the others lead
to dispersive (soliton) integrable systems. In the second case we
construct integrable systems with continuous number of degrees of
freedom as well as those with discrete number of degrees of freedom.
As each algebra leads to a huge number of specific integrable
systems, in this paper we present explicitly only few of them,
referring the reader to the cited references for further
examples.

In the following review classical $R$-matrix theory is presented in the framework
of an appropriate Lie algebras, with the aim of construction of integrable bi-hamiltonian systems.
However, the formalism is significantly more powerful as it is intimately connected
with factorization and Riemann-Hilbert problem for the related Lie groups~\cite{Semenov}.

\section{Classical $R$-matrix theory}\label{crf}

In this chapter we will present a unified approach to the
construction of integrable evolution equations together with their
(multi-)Hamiltonian structures. The idea originates from the
pioneering article \cite{Gelfand} by Gelfand and Dickey, where they
presented a construction of Hamiltonian soliton systems of KdV type
using pseudo-differential operators. Next, Adler \cite{Adler} showed
how to construct the bi-Hamiltonian structures for the above soliton
systems using the method based on the Kostant-Symes theorem obtained
independently in \cite{Kost} and \cite{Sym}. Later the abstract
formalism of classical $R$-matrices appeared in the paper \cite{Sk}
by Sklyanin as an intermediate step within the inverse quantum
scattering method. The present version of the formalism was given in
\cite{Reyman,Semenov}. In \cite{Li1,Oevel1} it was shown that there
are in fact three natural Poisson brackets associated with classical
$R$-structures. Quite recently Li \cite{Li} considered the classical
$R$-matrix theory on the so-called (commutative) Poisson algebras.
This approach leads to the construction of multi-Hamiltonian systems
of hydrodynamic (dispersionless) type.

\subsection{Classical $R$-matrices}

Let $\alg$ be a Lie algebra over the field $\Km$ of complex or real
numbers, $\Km =\Cm$ or $\Rm$, that is, $\alg$ is equipped with a
bilinear operation $[\cdot, \cdot]: \alg \times \alg \arrow \alg$,
called a Lie bracket, which is skew-symmetric and satisfies
the Jacobi identity. The Lie bracket $\brac{\cdot, \cdot}$ defines
the adjoint action of $\alg$ on $\alg$: $\ad_a b \equiv \brac{a, b}$.

\begin{definition}
A linear map $R: \alg \arrow \alg$ such that the operation
\begin{equation}\label{rbra}
  \brac{a,b}_R := \brac{R a, b} + \brac{a, R b}\qquad a,b\in\alg
\end{equation}
defines another Lie bracket on $\alg$ is called the classical
$R$-matrix.
\end{definition}

The skew-symmetry of \eqref{rbra} is obvious. As for the Jacobi
identity for \eqref{rbra}, we find that
\begin{align}
\nonumber 0 &= \brac{a,\brac{b, c}_R}_R + {\rm c.p.}
= \brac{Ra, \brac{Rb, c}} + \brac{Ra, \brac{b, Rc}} + \brac{a, R\brac{b, c}_R} + {\rm c.p.}\\
\nonumber &= \brac{Rb, \brac{Rc, a}} + \brac{Rc, \brac{a, Rb}} +
\brac{a, R\brac{b, c}_R} + {\rm c.p.}\\
\label{jacr} &= \brac{a, R\brac{b ,c}_R - \brac{Rb, Rc}} + {\rm c.p.},
\end{align}
where c.p. stands for cyclic permutations within the triple $\{a,b,c\}\in\alg$, and
the last equality follows from the Jacobi identity for $\brac{\cdot,
\cdot}$. Hence, a sufficient condition for $R$ to be a classical
$R$-matrix is to satisfy the so-called (modified) Yang-Baxter
equation, YB($\alpha$):
\begin{equation}\label{YB}
  \brac{R a, R b} -R \brac{a, b}_R + \alpha \brac{a,b} = 0,
\end{equation}
where $\alpha$ is a number from $\Km$. There are only two relevant
cases of YB($\alpha$), namely $\alpha\neq 0$ and $\alpha = 0$, as
all Yang-Baxter equations with $\alpha\neq 0$ are equivalent up to a
rescaling of $\alpha$.

\begin{definition}
A linear operator $A:\alg\arrow\alg$ is called intertwining if
$A\circ\ad_a = \ad_a\circ A$, i.e., if $A\brac{a,b} = \brac{A a,b} = \brac{a, A b}$
for any $a,b\in\alg$.
\end{definition}

\begin{proposition}\cite{rs}\label{int}
If $R$ is a classical $R$-matrix and $A$ an intertwining operator,
then $R\circ A$ also is a classical $R$-matrix.
\end{proposition}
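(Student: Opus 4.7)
The plan is to verify the Jacobi identity for the bracket $\brac{\cdot,\cdot}_{RA}$ defined by $RA$. By the computation displayed in \eqref{jacr}, this Jacobi identity is equivalent to the single cyclic condition
\[
\sum_{\rm c.p.} \brac{a, RA\brac{b,c}_{RA} - \brac{RAb, RAc}} = 0,
\]
so the entire proof reduces to verifying this identity for all $a,b,c\in\alg$.

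First I would establish the key algebraic identity $A\brac{b,c}_{RA} = \brac{Ab, Ac}_R$. Expanding the left-hand side as $A\brac{RAb, c} + A\brac{b, RAc}$ and applying the two equivalent forms of the intertwining property, $A\brac{RAb, c} = \brac{RAb, Ac}$ and $A\brac{b, RAc} = \brac{Ab, RAc}$, one obtains $\brac{RAb, Ac} + \brac{Ab, RAc} = \brac{Ab, Ac}_R$. This identity lets me rewrite the cyclic condition as
\[
\sum_{\rm c.p.} \brac{a, R\brac{Ab, Ac}_R - \brac{RAb, RAc}} = 0.
\]

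Next I would invoke the hypothesis that $R$ is a classical $R$-matrix. In the standard situation where $R$ satisfies the Yang-Baxter equation YB($\alpha$) from \eqref{YB}, one has $R\brac{Ab, Ac}_R - \brac{RAb, RAc} = -\alpha\brac{Ab, Ac}$. Iterating intertwining yields $\brac{Ab, Ac} = A\brac{b, Ac} = A^2\brac{b, c}$, and since $A^2$ again intertwines with $\ad_a$, the cyclic sum collapses to $-\alpha A^2 \sum_{\rm c.p.} \brac{a, \brac{b, c}}$, which vanishes by the Jacobi identity for the original bracket on $\alg$.

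I expect the main obstacle to be the genuinely general case in which $R$ is only assumed to be a classical $R$-matrix and need not satisfy a Yang-Baxter equation. There one has only the weaker condition $\sum_{\rm c.p.} \brac{u, S_R(v, w)} = 0$ with $S_R(x,y) := R\brac{x,y}_R - \brac{Rx, Ry}$, and substituting $u=Aa$, $v=Ab$, $w=Ac$ together with $\brac{Aa, X} = A\brac{a, X}$ only gives $A\cdot J_{RA}(a,b,c) = 0$, not $J_{RA}(a,b,c)=0$ itself. This gap closes whenever $A$ is injective; in the concrete constructions treated in this paper $R$ almost invariably satisfies YB($\alpha$), so the clean Yang-Baxter argument above suffices.
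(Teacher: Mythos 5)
Your proposal is correct and follows the same route as the paper's proof: the key identity is the same (you prove $A\brac{b,c}_{RA}=\brac{Ab,Ac}_R$; the paper records the consequence $RA\brac{a,b}_{RA}=R\brac{Aa,Ab}_R$), and both arguments then reduce the Jacobi identity for $\brac{\cdot,\cdot}_{RA}$ to the cyclic condition \eqref{jacr}. Where you differ is the final step, and your caution there is justified rather than excessive: the paper simply asserts that Jacobi for $\brac{\cdot,\cdot}_{RA}$ at $(a,b,c)$ ``reduces to'' Jacobi for $\brac{\cdot,\cdot}_R$ at $(Aa,Ab,Ac)$, but, exactly as you observe, the intertwining property only yields
\[
\sum_{\rm c.p.}\brac{Aa,\, R\brac{Ab,Ac}_R-\brac{RAb,RAc}} \;=\; A\sum_{\rm c.p.}\brac{a,\, R\brac{Ab,Ac}_R-\brac{RAb,RAc}},
\]
so the known vanishing of the left-hand side only places the $RA$-Jacobi expression in $\ker A$. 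Your two ways of closing that gap --- assuming $R$ satisfies YB($\alpha$), which is how every $R$-matrix in this paper is actually produced and is the setting of the cited source, or assuming $A$ injective --- are the right ones. The only blemish is a harmless sign slip: YB($\alpha$) gives $R\brac{Ab,Ac}_R-\brac{RAb,RAc}=+\alpha\brac{Ab,Ac}$ rather than $-\alpha\brac{Ab,Ac}$, which does not affect the conclusion since the cyclic sum $\alpha A^2\sum_{\rm c.p.}\brac{a,\brac{b,c}}$ vanishes by the Jacobi identity for $\brac{\cdot,\cdot}$ in either case.
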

\begin{proof}
We have
\begin{align*}
    RA\brac{a,b}_{RA} &= RA\brac{RAa,b} + RA\brac{a,RAb}\\
     &= R\brac{RAa,Ab} + R\brac{Aa,RAb} = R\brac{Aa,Ab}_R.
\end{align*}
Hence,
\begin{equation*}
    \brac{RAa,RAb} - RA\brac{a,b}_{RA} = \brac{RAa,RAb} - R\brac{Aa,Ab}_R
\end{equation*}
and the Jacobi identity for $[\cdot,\cdot]_{RA}$ with respect to the
elements $a,b,c\in\alg$ reduces to the Jacobi identity for
$[\cdot,\cdot]_R$ with respect to the elements $Aa,Ab,Ac$, see
\eqref{jacr}.
\end{proof}

Notice that a linear combination of intertwining operators again is
an intertwining operator.

\subsection{Lax hierarchy}\label{seclax}

In this section we present the classical $R$-matrix formalism for
the class of Lie algebras for which the Lie bracket additionally
satisfies the Leibniz rule. Later, while
considering the loop algebras in Section \ref{la}, we shall drop this extra condition.

Assume that the Lie algebra $\alg$ is also an algebra with respect
to an associative multiplication such that
\begin{equation}\label{derm}
  \ad_a(bc) = \ad_a(b)c + b\ad_a(c)\quad\iff\quad\brac{a, b c} = \brac{a, b}c + b\brac{a, c}
\end{equation}
the Leibniz rule holds, i.e., the Lie bracket $[\cdot, \cdot]$ is a
derivation with respect to the multiplication. Notice that this
condition is satisfied automatically in the case of a commutative
algebra $\alg$ when the Lie bracket is given by a finite-dimensional
Poisson bracket, as well as in the case of a non-commutative algebra
$\alg$ with the Lie bracket given by the commutator.

As a consequence of the above assumption, any well-defined smooth
mapping $X: \alg \arrow \alg$, $L\map X(L)$,
is an invariant of the Lie bracket, i.e.,
\begin{equation}\label{inv}
  \brac{X(L), L} = 0.
\end{equation}
The smoothness of the mapping $X$ means that its differential and
directional derivatives exist and are well-defined.

\begin{proposition}\label{ybp}
Smooth invariant functions $X_n(L)$ generate a hierarchy of vector
fields on $\alg$ of the form
\begin{equation}\label{lax}
  L_{t_n} = \brac{R X_n(L), L}\qquad L\in\alg\quad n\in\Z,
\end{equation}
where $t_n$ are evolution parameters. Assuming that a classical
$R$-matrix $R$ commutes with the directional derivatives with respect to
all \eqref{lax}, the Yang-Baxter equation \eqref{YB} is a sufficient
condition for the pairwise commutativity of the vector fields
\eqref{lax}.
\end{proposition}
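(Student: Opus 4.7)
The plan is to prove commutativity of the flows \eqref{lax} by directly verifying that the mixed derivatives $L_{t_n t_m}$ and $L_{t_m t_n}$ coincide for any $L\in\alg$, which is equivalent to the vanishing of the commutator of the associated vector fields. The core computation is a telescoping of the Jacobi identity with the Yang--Baxter equation \eqref{YB}.

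First I would differentiate $L_{t_n}=\brac{RX_n(L),L}$ along the $t_m$ flow. Using the hypothesis that $R$ commutes with directional derivatives, this yields
\begin{equation*}
L_{t_n t_m} = \brac{R(\pr_{t_m}X_n),L} + \brac{RX_n,\brac{RX_m,L}},
\end{equation*}
and symmetrically for $L_{t_m t_n}$. The pivotal auxiliary identity is $\pr_{t_m}X_n = \brac{RX_m,X_n}$: by the Leibniz rule \eqref{derm}, the map $\brac{RX_m,\cdot}$ is a derivation of the associative multiplication, so its action on any $X_n(L)$ built from $L$ through that multiplication coincides with the directional derivative along $L_{t_m}=\brac{RX_m,L}$. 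This is the same mechanism that makes the invariance \eqref{inv} automatic.

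Substituting into both mixed partials and subtracting, the double-bracket part simplifies via Jacobi to $\brac{\brac{RX_n,RX_m},L}$, while the remaining first-order terms $R\brac{RX_m,X_n}-R\brac{RX_n,X_m}$ rearrange by the definition \eqref{rbra} of $\brac{\cdot,\cdot}_R$ to $-R\brac{X_n,X_m}_R$. Applying the Yang--Baxter equation \eqref{YB} to $\brac{RX_n,RX_m}$ then cancels the $R\brac{X_n,X_m}_R$ contributions and collapses the difference to
\begin{equation*}
L_{t_n t_m}-L_{t_m t_n} = -\alpha\brac{\brac{X_n,X_m},L}.
\end{equation*}
One final Jacobi, $\brac{\brac{X_n,X_m},L}=\brac{X_n,\brac{X_m,L}}-\brac{X_m,\brac{X_n,L}}$, kills this by the invariance of both $X_n$ and $X_m$, independently of the value of $\alpha$.

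The place I expect trouble is in cleanly justifying the auxiliary identity $\pr_{t_m}X_n=\brac{RX_m,X_n}$ at the right level of generality: it hinges on reading $X_n$ as a function of $L$ built from the algebra operations (prototypically $L^n$), and on the Leibniz hypothesis \eqref{derm}. Once this identity is accepted, the remainder is symbol-pushing using only Jacobi and \eqref{YB}.
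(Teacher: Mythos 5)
Your proof is correct and follows essentially the same route as the paper: both rest on the derivation identity $\pr_{t_m}X_n(L)=\brac{RX_m(L),X_n(L)}$ (a consequence of the Leibniz rule, exactly as you justify it), the hypothesis that $R$ commutes with directional derivatives, one application of Jacobi, and the Yang--Baxter equation to collapse the residual to a multiple of $\brac{X_n,X_m}$. The only cosmetic difference is that the paper first isolates the zero-curvature condition as an intermediate sufficient criterion and kills the residual by noting $\brac{X_m(L),X_n(L)}=0$ directly, whereas you carry the outer bracket with $L$ through and finish with invariance; the two endgames are trivially equivalent.
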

\begin{proof}
The directional derivative of a smooth function $F:\alg \arrow \alg$
in the direction of \eqref{lax} is given by
\begin{equation}\label{dirF}
  F(L)_{t_n} = F(L)'\brac{L_{t_n}} = \brac{R X_n(L), F(L)},
\end{equation}
which follows from the Leibniz rule \eqref{derm}. Thus, one finds that
\begin{align}
\nonumber   &\bra{L_{t_m}}_{t_n} - \bra{L_{t_n}}_{t_m} =
\brac{RX_m(L),L}_{t_n} - \brac{RX_n(L),L}_{t_m} \\
\nonumber  &\qquad = \brac{\bra{RX_m(L)}_{t_n} - \bra{RX_n(L)}_{t_m}, L} +
\brac{RX_m(L),\brac{RX_n(L),L}} -
\brac{RX_n(L),\brac{RX_m(L),L}}\\
 \label{comm1} &\qquad = \brac{\bra{RX_m(L)}_{t_n} - \bra{RX_n(L)}_{t_m} +
\brac{RX_m(L),RX_n(L)},L}.
\end{align}
Hence, for the pairwise commutativity of vector fields \eqref{lax} it suffices
that the so-called zero-curvature equations
\begin{equation}\label{zero}
\bra{RX_m(L)}_{t_n} - \bra{RX_n(L)}_{t_m} + \brac{RX_m(L),RX_n(L)} =
0
\end{equation}
hold.

The assumption that $R$ commutes with directional derivatives implies
that it commutes with the derivatives with respect to evolution
parameters, i.e.,
\begin{equation}\label{assum}
    \bra{R L}_{t_n} = R L_{t_n}.
\end{equation}
Thus, the right hand side of \eqref{zero} becomes
\begin{align}
\nonumber &R\bra{X_m(L)}_{t_n} - R\bra{X_n(L)}_{t_m} + \brac{RX_m(L),RX_n(L)}=\\
\nonumber &\qquad \overset{{\rm by\,\eqref{dirF}}}{=}
R\brac{RX_n(L),X_m(L)} - R\brac{RX_m(L),X_n(L)} + \brac{RX_m(L),RX_n(L)}\\
\label{comm2} &\qquad = \brac{RX_m(L),RX_n(L)} - R\brac{X_m(L),X_n(L)}_R.
\end{align}
Now, if the $R$-matrix $R$ satisfies the Yang-Baxter equation
$\eqref{YB}$ then the last expression is equal to $-\alpha
\brac{X_m(L),X_n(L)} = 0$, and the result follows.
\end{proof}

The hierarchy \eqref{lax} is called {\it the Lax hierarchy} and $L$
is called the Lax operator or the Lax function depending on the
nature of a given Lie algebra $\alg$. Notice that the assumption
that $R$ commutes with directional derivatives is an important
condition although is not enunciated explicitly in most works on the
$R$-matrices.

The natural choice for invariant smooth functions are the power
functions $X_n(L)=L^n$ that are always well-defined on $\alg$. One
can consider less trivial functions, for example the logarithmic
ones, like $X(L) = \ln L$, but only when they have proper
interpretation in $\alg$.

It is natural to ask when the abstract Lax hierarchy \eqref{lax}
represents a ``real" hierarchy of integrable evolution systems on a
suitable function space constituting an infinite-dimensional phase
space $\U$. This occurs when we can construct an embedding map
$\iota:\U\arrow\alg$, which induces the differential structure on
$\alg$. Notice that for $\iota$ being an embedding its differential
$\iota':\V\arrow\alg$ is an injective map, where $\V$ is a space of
vector fields on $\U$. In such a case the Lax hierarchy \eqref{lax}
can be pulled back to the original function space by $\iota'^{-1}$.
The symmetries from the Lax hierarchy \eqref{lax} represent
compatible evolution systems when the left- and right-hand sides of
\eqref{lax} span the same subspace of $\alg$. So, the Lax element
$L$ of $\alg$ has to be chosen in a suitable fashion.

\subsection{Simplest $R$-matrices}\label{scheme}

The simplest way to obtain a classical $R$-matrix is to decompose a
given Lie algebra into Lie subalgebras.  Thus, assume that a Lie
algebra $\alg$ can be split into a (vector) direct sum of Lie subalgebras $\alg_+$ and $\alg_-$,
i.e.,
\begin{equation*}
\alg = \alg_+ \oplus \alg_-\qquad \brac{\alg_\pm, \alg_\pm}\subset
\alg_\pm\qquad \alg_+\cap \alg_- = \emptyset .
\end{equation*}
It is important to stress that we do not require that $[\alg_+,\alg_-]=0$.

Upon denoting the projections onto the subalgebras in question by
$P_\pm$, we define a linear map $R: \alg\arrow \alg$ as
\begin{equation}\label{rp}
R = \frac{1}{2} (P_+ - P_-).
\end{equation}
Using the equality $P_+ + P_- = 1$ \eqref{rp} can be represented in
the following equivalent forms:
\begin{equation}\label{rp0}
R = P_+ - \frac{1}{2} = \frac{1}{2} - P_- .
\end{equation}

Let $a_\pm:= P_\pm (a)$ for $a\in \alg$. Then
\begin{equation*}
  \brac{a, b}_R = \brac{a_+, b_+} - \brac{a_-, b_-}\quad
  \Longrightarrow\quad
  R\brac{a, b}_R = \frac{1}{2} \brac{a_+, b_+} + \frac{1}{2} \brac{a_-,
  b_-}
\end{equation*}
and
\begin{equation*}
  \brac{Ra, Rb} = \frac{1}{4}\brac{a_+, b_+} - \frac{1}{4}\brac{a_+,
  b_-} - \frac{1}{4}\brac{a_-, b_+} + \frac{1}{4}\brac{a_-, b_-}.
\end{equation*}
Hence, the map \eqref{rp} satisfies the Yang-Baxter equation
\eqref{YB} for $\alpha = \frac{1}{4}$ and is a well-defined
classical $R$-matrix. This is the simplest and the most common
example of a well-defined $R$-matrix.

For instance, the Lax hierarchy \eqref{lax} for the $R$-matrix
\eqref{rp}, following from the decomposition of a Lie algebra into
Lie subalgebras, takes the form
\begin{equation*}
L_{t_n} = \brac{\bra{X_n(L)}_+, L} = - \brac{\bra{X_n(L)}_-,L}.
\end{equation*}
It can be written in two equivalent ways because \eqref{rp0} holds.

The construction of majority of known integrable systems within
the formalism presented above is based on the classical $R$-matrices
that follow from the double decomposition (\ref{rp}) of Lie algebras
into Lie subalgebras. In \cite{Oevel4,Szablikowski1,skryp1,Serg} the authors considered deformations of
\eqref{rp} that preserve the Yang-Baxter equation and originate from
a triple decomposition of a given Lie algebra; see also
\cite{szabla} for multiple decompositions of Lie algebras.

\subsection{Lie-Poisson structures}

Let $\alg^*$ be a (regular) dual of a given Lie algebra $\alg$ and
$\dual{\cdot, \cdot}:\alg^*\times \alg\arrow \Km$ be the usual
duality pairing. The co-adjoint action $\ad^*$ of $\alg$ on $\alg^*$
is defined through the relation
\begin{equation}\label{coad}
\dual{\ad^*_a \eta, b} + \dual{\eta, \ad_a b} = 0\quad \iff\quad
\dual{\ad^*_a\eta, b} = -\dual{\eta, \brac{a, b}},
\end{equation}
where $a,b\in\alg$ and $\eta\in \alg^*$

Let this time $\iota:\U\arrow \alg^*$ be the embedding of the
original phase space into the dual Lie algebra. Then every
functional $F:\U\arrow \Km$ can be extended to a smooth function on
$\alg^*$. Therefore, let $\smf(\alg^*)$ be the space of all smooth
functions on $\alg^*$ of the form $F\circ
\iota^{-1}:\alg^\star\arrow \Km$, where $F\in \smf(\U)$. Then the
differentials $dF(\eta)$ of $F(\eta)\in \smf(\alg^*)$ at the point
$\eta\in \alg^*$ belong to $\alg$ as they can be evaluated using the
relation
\begin{equation}\label{grad}
F(\eta)'[\xi]\equiv \Diff{F(\eta+\epsilon \xi)}{\epsilon}{\epsilon=0} =
\dual{\xi, dF(\eta)}\qquad \xi\in \alg^*\quad \epsilon\in\Km.
\end{equation}
Moreover, the form of differentials $dF\in\alg^*$ has to be such
that the duality pairing between $\alg$ and its dual $\alg^*$
coincides with the duality map between vector fields and one-forms
on the original infinite-dimensional function phase space $\U$.
Indeed,
\begin{equation}\label{eu}
  \dual{\eta_t,dF} = \int\sum_{i=0}^\infty \var{F}{u_i} (u_i)_t\,dx,
\end{equation}
where $\eta_t\in\alg^*$ is a vector field on $\alg^*$,
$F(\eta)\in\smf(\alg^*)$ is a functional depending on the dynamical
fields $u_i$ from the phase space $\U$, $\var{F}{u_i}$ is the
variational derivative of $F$ with respect to field variable $u_i$
and in the case of discrete functionals the symbol of integration in
\eqref{eu} must be replaced by an appropriate summation.
We also have the relation
\begin{equation}\label{dd2}
\dual{\xi, dF'[\eta]} = \dual{\eta, dF'[\xi]},
\end{equation}
which is equivalent to the vanishing of the square of the
exterior differential, i.e., $d^2F=0$.

We also make an additional assumption that the Lie bracket in $\alg$
is such that directional derivative along an arbitrary
$\xi\in\alg^\star$ is a derivation of the Lie bracket. This means
that the following relation holds:
\begin{align}\label{cond}
\brac{a, b}'[\xi] = \brac{a'[\xi], b} + \brac{a, b'[\xi]}.
\end{align}

\begin{theorem}\label{nlp}
There exists a Poisson bracket on the space of smooth functions on a
dual algebra $\alg^*$, which is induced by the Lie bracket on
$\alg$. This Poisson bracket is defined as follows:
\begin{equation}\label{nliepo}
\pobr{H,F}(\eta):= \dual{\eta, \brac{dF, dH}}\qquad \eta\in \alg^*
\quad H, F\in \smf(\alg^*).
\end{equation}
\end{theorem}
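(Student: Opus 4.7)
The plan is to verify the four defining properties of a Poisson bracket: bilinearity, skew-symmetry, the Leibniz rule, and the Jacobi identity. Bilinearity is immediate from the linearity of the differential and bilinearity of $[\cdot,\cdot]$, and skew-symmetry $\{H,F\}=-\{F,H\}$ follows at once from the skew-symmetry of the Lie bracket. The Leibniz rule $\{H,FG\}=F\{H,G\}+G\{H,F\}$ reduces, via the product rule $d(FG)=F\,dG+G\,dF$ (a direct consequence of \eqref{grad}), to the bilinearity of $[\cdot,\cdot]$.

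The substantive part is the Jacobi identity. The plan is first to compute the directional derivative of the functional $\{H,F\}$, using the Leibniz-type assumption \eqref{cond}. Expanding
\begin{equation*}
    \{H,F\}(\eta+\epsilon\xi) = \dual{\eta+\epsilon\xi,\ [dF(\eta+\epsilon\xi),\ dH(\eta+\epsilon\xi)]}
\end{equation*}
to first order in $\epsilon$ and using \eqref{grad} together with \eqref{cond} yields
\begin{equation*}
    \dual{\xi, d\{H,F\}} = \dual{\xi, [dF,dH]} + \dual{\eta, [dF'[\xi], dH]} + \dual{\eta, [dF, dH'[\xi]]}
\end{equation*}
for arbitrary $\xi\in\alg^*$.

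To evaluate $\{\{H,F\},G\}(\eta)=\dual{\eta,[dG,d\{H,F\}]}$, I would use \eqref{coad} to rewrite this as $-\dual{\ad^*_{dG}\eta, d\{H,F\}}$, and then substitute $\xi=-\ad^*_{dG}\eta$ in the previous display. A further application of \eqref{coad} turns the principal piece into $\dual{\eta,[dG,[dF,dH]]}$. Summing over the three cyclic permutations of $(H,F,G)$, the principal contributions collapse to $\dual{\eta, [dG,[dF,dH]]+[dH,[dG,dF]]+[dF,[dH,dG]]}$, which vanishes by the Jacobi identity on $\alg$.

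What remains are six ``correction'' terms, each of shape $\dual{\eta,[dF'[\ad^*_{dG}\eta],dH]}$ or $\dual{\eta,[dF,dH'[\ad^*_{dG}\eta]]}$. The plan for these is to apply \eqref{coad} once more to bring each into a pairing of the form $\pm\dual{\ad^*_{dX}\eta,\, dY'[\ad^*_{dZ}\eta]}$, and then to invoke the symmetry relation \eqref{dd2}, which swaps the two $\eta$-slots inside $dY'[\cdot]$. A direct check shows that the six terms group into three cancelling pairs; for instance, the term with $(dF'[\ad^*_{dG}\eta], dH)$ cancels against the one with $(dG, dF'[\ad^*_{dH}\eta])$. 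The main obstacle here is purely bookkeeping: keeping track of signs and correctly matching the six terms into their cancelling pairs. Once this is settled, the Jacobi identity is established, completing the proof.
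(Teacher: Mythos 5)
Your proposal is correct and follows essentially the same route as the paper: compute the directional derivative of $\pobr{H,F}$ using \eqref{grad}, \eqref{cond} and \eqref{coad} (the paper isolates this as a separate lemma giving \eqref{dn}), let the principal terms cancel by the Jacobi identity on $\alg$, and pair off the six derivative correction terms using \eqref{coad} together with the symmetry \eqref{dd2}; your sample pairing is the right one. The only difference is cosmetic — you additionally verify the Leibniz rule, which the paper leaves implicit.
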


\begin{lemma}
The differential of \eqref{nliepo} is given by
\begin{equation}\label{dn}
d\pobr{H, F} = \brac{dF, dH} + dF'\brac{\ad^*_{dH}\eta} -
dH'\brac{\ad^*_{dF}\eta}.
\end{equation}
\end{lemma}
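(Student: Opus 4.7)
The plan is to compute $d\pobr{H,F}$ directly from the defining relation \eqref{grad}, i.e., by working out the directional derivative $\pobr{H,F}'[\xi]$ in an arbitrary direction $\xi\in\alg^*$ and then reading off the gradient from the pairing $\dual{\xi,\cdot}$.

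First I would write $\pobr{H,F}(\eta+\epsilon\xi)=\dual{\eta+\epsilon\xi,\brac{dF(\eta+\epsilon\xi),dH(\eta+\epsilon\xi)}}$ and differentiate with respect to $\epsilon$ at $\epsilon=0$. The linear factor $\eta+\epsilon\xi$ contributes the term $\dual{\xi,\brac{dF,dH}}$, while the derivative of the bracketed expression, using \eqref{cond} so that $\brac{\cdot,\cdot}$ behaves as a derivation along $\xi$, produces $\dual{\eta,\brac{dF'[\xi],dH}}+\dual{\eta,\brac{dF,dH'[\xi]}}$. These three summands make up $\pobr{H,F}'[\xi]$.

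The second step is to convert the last two terms into pairings in which $\xi$ occupies the \emph{outer} slot. Using the co-adjoint relation \eqref{coad} in its two equivalent forms, $\dual{\eta,\brac{a,b}}=-\dual{\ad^*_a\eta,b}=\dual{\ad^*_b\eta,a}$, the two terms become $\dual{\ad^*_{dH}\eta,dF'[\xi]}$ and $-\dual{\ad^*_{dF}\eta,dH'[\xi]}$. I would then invoke the symmetry \eqref{dd2} (with $\eta$ there replaced by $\ad^*_{dH}\eta$ and $\ad^*_{dF}\eta$ respectively) to swap $\xi$ with the co-adjoint element, obtaining $\dual{\xi,dF'[\ad^*_{dH}\eta]}$ and $-\dual{\xi,dH'[\ad^*_{dF}\eta]}$. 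Collecting the three contributions under a single pairing $\dual{\xi,\,\cdot\,}$ and comparing with \eqref{grad} then produces the claimed formula \eqref{dn}.

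The computation is essentially mechanical; the points requiring care are the sign bookkeeping that comes from antisymmetry of $\brac{\cdot,\cdot}$ when invoking \eqref{coad} on the two terms, and verifying that \eqref{dd2} is applied legitimately, i.e.\ that $\ad^*_{dH}\eta$ and $\ad^*_{dF}\eta$ are genuine elements of $\alg^*$ so that $dF'$ and $dH'$ can be evaluated on them. I do not anticipate any substantive obstacle beyond these routine checks.
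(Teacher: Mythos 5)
Your proposal is correct and follows exactly the paper's own argument: expand $\pobr{H,F}'[\xi]$ via \eqref{grad} and \eqref{cond}, move the bracket onto the co-adjoint action with \eqref{coad}, and then use the symmetry \eqref{dd2} to bring $\xi$ into the outer slot. The sign bookkeeping you flag works out precisely as you describe, so there is nothing to add.
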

\begin{proof}
By \eqref{grad} we find that
\begin{align*}
\pobr{H, F}'[\xi] &= \dual{\eta'[\xi], \brac{dF, dH}}+\dual{\eta,
\brac{dF'[\xi], dH}+\brac{dF, dH'[\xi]}}\\
&\overset{{\rm by\,\eqref{coad}}}{=} \dual{\xi, \brac{dF, dH}}
+\dual{\ad^*_{dH}\eta, dF'[\xi]}-\dual{\ad^*_{dF}\eta, dH'[\xi]}\\
&\overset{{\rm by\,\eqref{dd2}}}{=} \dual{\xi, \brac{dF, dH} +
dF'\brac{\ad^*_{dH}\eta} - dH'\brac{\ad^*_{dF}\eta}},
\end{align*}
and the result follows.
\end{proof}

\begin{proof**}{Theorem \ref{nlp}}
Bilinearity and skew-symmetry of \eqref{nliepo} are obvious, so we
only have to prove the Jacobi identity:
\begin{align*}
& \pobr{F,\pobr{G,H}} + c.p. = \dual{\eta, \brac{d\pobr{G, H}, dH} +
c.p.}\\
&\qquad\quad\overset{{\rm by\,\eqref{dn}}}{=} \dual{\eta,
\brac{\brac{dH, dG}, dF}
+\brac{dH'\brac{\ad^*_{dG}\eta}, dF} - \brac{dG'\brac{\ad^*_{dH}\eta}, dF} + c.p.}\\
&\qquad\quad\overset{{\rm by\,\eqref{coad}}}{=} \dual{\eta,
\brac{\brac{dH,
dG}, dF}} + \dual{\ad^*_{dF} \eta, dH'\brac{\ad^*_{dG}\eta}} - \dual{\ad^*_{dF}\eta, dG'\brac{\ad^*_{dH}\eta}} + c.p.\\
&\qquad\quad\overset{{\rm by\,c.p.}}{=}\dual{\eta, \brac{\brac{dH,
dG}, dF}} + \dual{\ad^*_{dF}\eta,
dH'\brac{\ad^*_{dG}\eta}} - \dual{\ad^*_{dG}\eta, dH'\brac{\ad^*_{dF}\eta}} + c.p.\\
&\qquad\quad\overset{{\rm by\,\eqref{dd2}}}{=} \dual{\eta,
\brac{\brac{dH, dG}, dF} + c.p.} = 0,
\end{align*}
where the last equality follows from the Jacobi identity for
$\brac{\cdot, \cdot}$.
\end{proof**}

The bracket \eqref{nliepo} is called a (natural) {\it Lie-Poisson bracket}
and was originally discovered by Sophus Lie. Its modern formulation
is due to Berezin \cite{Ber} as well as Kirillov and Kostant
\cite{Kir}.\looseness=-1

Now assume that we have an additional Lie bracket \eqref{rbra} on
$\alg$ defined through the classical $R$-matrix such that
\eqref{assum} is valid. Then \eqref{rbra} satisfies the condition
\eqref{cond}. As a result, there is another well-defined (by Theorem
\ref{nlp}) Lie-Poisson bracket on the space of scalar fields
$\smf(\alg^*)$:
\begin{equation}\label{rliepo}
\pobr{H,F}_R(\eta):= \dual{\eta, \brac{dF, dH}_R}\qquad \eta\in
\alg^* \quad H, F\in \smf(\alg^*).
\end{equation}

Using \eqref{coad} we find that the associated Poisson operators at
$\eta\in \alg^*$, the one for the natural Lie-Poisson bracket
\eqref{nliepo} and the second one, \eqref{rliepo}, have the form
\begin{align*}
&\pobr{H, F} = \dual{\pi dH, dF}\quad\iff\quad \pi: dH\map \ad^*_{dH}\eta\\
&\pobr{H, F}_R = \dual{\pi_R dH, dF} \quad\iff\quad \pi_R: dH \map
\ad^*_{R dH}\eta + R^* \ad^*_{dH}\eta,
\end{align*}
where the adjoint of $R$ is defined by the relation $\dual{R^*\eta, a} = \dual{\eta, R a}$, where
$\eta\in \alg^*$ and $a\in\alg$.

The following theorem constitutes the essence of the classical
$R$-matrix formalism.

\begin{theorem}[\cite{Semenov}]\label{rsts}
The Casimir functions $C_n\in\smf(\alg^*)$ of the natural
Lie-Poisson bracket \eqref{nliepo} are in involution with respect to
the Lie-Poisson bracket \eqref{rliepo} induced by \eqref{rbra}.
Moreover, $C_n$ generate a hierarchy of vector fields on $\alg^*$:
\begin{equation}\label{duallax}
  \eta_{t_n} = \pi_R dC_n(\eta) = \ad^*_{R dC_n}\eta\qquad \eta\in\alg^*.
\end{equation}
The evolution systems \eqref{duallax} pairwise commute, i.e.,
$(\eta_{t_m})_{t_n} = (\eta_{t_n})_{t_m}$, and are Hamiltonian with
respect to \eqref{rliepo}. Moreover, any equation from
\eqref{duallax} admits all Casimir functions $C_n$ of (\ref{nliepo})
as integrals of motion.
\end{theorem}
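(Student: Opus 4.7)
The plan hinges on a single observation: a Casimir $C_n$ of the natural Lie--Poisson bracket \eqref{nliepo} is characterised by $\ad^*_{dC_n}\eta=0$ for every $\eta\in\alg^*$, and this one identity drives all four assertions of the theorem; I expect no serious obstacle, only a bit of sign bookkeeping when passing between $\pi_R$, $\brac{\cdot,\cdot}_R$, and $\pobr{\cdot,\cdot}_R$.

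First I would establish the hierarchy form \eqref{duallax} together with its Hamiltonian property. From the formula for the second Poisson operator recorded just above the theorem, $\pi_R dH = \ad^*_{R\,dH}\eta + R^*\ad^*_{dH}\eta$, the Casimir property annihilates the second summand, giving $\eta_{t_n} = \pi_R dC_n(\eta) = \ad^*_{R\,dC_n}\eta$, which is exactly \eqref{duallax}. The Hamiltonian nature of this flow with respect to \eqref{rliepo} is then tautological, since $\pi_R$ is by construction the Poisson operator of that bracket --- a genuine Poisson bracket because \eqref{rbra} satisfies the derivation condition \eqref{cond} and hence falls under Theorem~\ref{nlp}.

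Next I would prove involution by a direct expansion,
\begin{equation*}
\pobr{C_m, C_n}_R(\eta) = \dual{\eta, \brac{R\,dC_m, dC_n}} + \dual{\eta, \brac{dC_m, R\,dC_n}},
\end{equation*}
and then rewrite each summand, using \eqref{coad} together with the skew-symmetry of $[\cdot,\cdot]$, in the form $\pm\dual{\ad^*_{dC_k}\eta, R\,dC_\ell}$ with $\{k,\ell\}=\{m,n\}$; both expressions vanish by the Casimir property.

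Finally, once $\pobr{C_m, C_n}_R \equiv 0$ is in hand, both the pairwise commutativity of the flows \eqref{duallax} and the conservation of every Casimir $C_k$ along every such flow follow from the standard Poisson argument: writing $F_{t_n} = \pobr{C_n, F}_R$ for any $F\in\smf(\alg^*)$, the Jacobi identity for \eqref{rliepo} yields
\begin{equation*}
(F_{t_n})_{t_m} - (F_{t_m})_{t_n} = \pobr{\pobr{C_m, C_n}_R, F}_R = 0,
\end{equation*}
so the flows commute (taking $F$ to be a separating coordinate on $\alg^*$) and each Casimir is conserved (taking $F=C_k$, or equivalently observing directly that $(C_k)_{t_n} = \pobr{C_n, C_k}_R = 0$).
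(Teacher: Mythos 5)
Your proposal is correct and follows essentially the same route as the paper: characterize the Casimirs by $\ad^*_{dC_n}\eta=0$, deduce involution with respect to $\pobr{\cdot,\cdot}_R$, and then obtain commutativity of the flows from involution via the Jacobi identity (the paper phrases this last step as $\pi_R d$ being a Lie algebra homomorphism onto Hamiltonian vector fields, which is the same mechanism). You merely spell out the involution computation and the Jacobi-identity step that the paper leaves implicit.
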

\begin{proof}
The Casimir functions $\C_n$ of the natural Lie-Poisson bracket
\eqref{nliepo} satisfy the following condition
\begin{equation*}
\forall\ F\in \smf(\alg^*)\quad  \pobr{F, C_n} = 0\quad \iff\quad
\ad^*_{dC_n}\eta =0,
\end{equation*}
that is, their differentials are $\ad^*$-invariant. Hence, they are
in involution with respect to the Lie-Poisson bracket
\eqref{rliepo}, i.e., $\pobr{C_n, C_m}_R = 0$. Now, as $\pi_R d$ is a Lie algebra homomorphism
from the Poisson algebra of smooth functions on $\alg^*$ to the Lie algebra
 of vector fields on $\alg^*$, commutativity of Hamiltonian vector fields \eqref{duallax}, with the
Casimir functions as Hamiltonians, readily follows.
\end{proof}

In fact, when the $R$-matrix follows from the decomposition of an Lie algebra
into sum of Lie subalgebras, i.e. is given by \eqref{rp}, Theorem \ref{rsts} can be considered
as a generalization of Kostant-Symes theorem \cite{Kost,Sym}.

The construction of Casimir functions $C_n$ and related dynamical
systems \eqref{duallax} on the dual Lie algebra $\alg^*$ is, in
contrast with \eqref{lax}, often inconvenient and impractical. Thus,
a formulation of a similar theory on $\alg$ instead on $\alg^*$ is
often justified. This can be done when one can identify $\alg^*$
with $\alg$ by means of a suitable scalar product.

\subsection{$Ad$-invariant scalar products}

We restrict our further considerations to the Lie algebras $\alg$
for which their duals $\alg^*$ can be identified with $\alg$ through
a duality map. So, we assume the existence of a bi-linear scalar
product
\begin{equation}\label{sym}
\bra{\cdot, \cdot}_\alg: \alg \times \alg \arrow \Km
\end{equation}
on $\alg$, and we assume this product to be symmetric: $\bra{a, b}_\alg = \bra{b, a}_\alg$,
and non-degenerate, that is, $a=0$ is the only element of $\alg$
that satisfies $\bra{a, b}_\alg = 0$ for all $b\in \alg$.
Then we can identify $\alg^*$ with $\alg$ ($\alg^* \cong \alg$) by
setting $\dual{\eta, b} = \bra{c, b}_\alg$, $\forall b\in \alg$,
where $\eta\in \alg^*$ is identified with $c\in \alg$. We also make
an additional assumption that the symmetric product \eqref{sym} is
$ad$-invariant, i.e.,
\begin{equation}\label{adinv}
\bra{\brac{a, c}, b}_\alg + \bra{c, \brac{a, b}}_\alg = 0.
\end{equation}
This is a counterpart of the relation \eqref{coad}. Thus, if
$\eta\in \alg^*$ is identified with $c\in \alg$ we have $\dual{\ad^*_a \eta, b} = \bra{\brac{a, c}, b}_\alg$ and one identifies $\ad^*_a \eta\in \alg^*$ with $\ad_a c\in \alg$.

In fact, by virtue of the scheme presented in the previous section
all equations from the hierarchy \eqref{lax} are Hamiltonian. Since
$\alg^*\cong\alg$, the Lie-Poisson brackets \eqref{nliepo} and
\eqref{rliepo} on the space of scalar fields
$\smf(\alg\cong\alg^*)$ at $L\in \alg$ take the form
\begin{align}
\nonumber  &\pobr{H, F} = \bra{L, \brac{dF, dH}}_\alg = \bra{dF, \pi
dH}_\alg \quad\iff\quad \pi dH=\brac{dH, L}\\
\label{lin}  &\pobr{H, F}_R = \bra{L, \brac{dF, dH}_R}_\alg = \bra{dF,\pi_R dH}_\alg
\ \iff\ \pi_R dH = \brac{R dH, L} + R^*\brac{dH, L},
\end{align}
where now $R^*$ is defined by the relation $\bra{R^*a, b}_\alg = \bra{a, Rb}_\alg$.

Differentials of the Casimir functions $\C_n(L)\in\smf(\alg)$ of the
natural Lie-Poisson bracket are invariants of the Lie bracket, i.e.,
$\brac{dC_n(L), L}= 0$. Obviously, the Casimir functions are still
in involution with respect to the second Lie-Poisson bracket defined
by $R$ and generate pairwise commuting Hamiltonian vector fields of
the form
\begin{equation}\label{hamlax}
  L_{t_n} = \pi_R dC_n(L) = \brac{R dC_n, L}.
\end{equation}

Notice that the Lax hierarchy \eqref{lax} coincides with \eqref{hamlax}
for $X_n = dC_n$. Moreover, it follows that if there exists a
symmetric, non-degenerate and $ad$-invariant product on $\alg$ then
the Yang-Baxter equation \eqref{YB} is not a necessary condition for
the commutativity of vector fields from the Lax hierarchy
\eqref{lax}. However, if \eqref{YB} is not satisfied then the
zero-curvature equations \eqref{zero} will not be automatically
satisfied as well.

The simplest way to define an appropriate scalar product on a Lie
algebra $\alg$ is to use a trace form $\Tr: \alg \arrow \bb{K}$ such
that the scalar product
\begin{equation}\label{sc}
\bra{a,b}_\alg := \Tr \bra{ab}\qquad a,b\in \alg
\end{equation}
is nondegenerate. In this case the symmetry of \eqref{sc} entails
that
\begin{equation}\label{abba}
  \Tr \bra{a b} = \Tr \bra{b a}.
\end{equation}
\begin{lemma}\label{adtr}
Let $\Tr: \alg \arrow \bb{K}$ be a trace form defining a symmetric
and nondegenerate scalar product \eqref{sc} such that the trace of
Lie bracket vanishes, i.e. $\Tr \brac{a, b}=0$ for all $a,b\in\alg$.
Then the condition \eqref{derm} for the Lie bracket to be a
derivation with respect to the multiplication is a sufficient
condition for \eqref{sc} to be $\ad$-invariant.

Moreover, if the Lie bracket in $\alg$ is given by the commutator,
$[a,b]=ab-ba$, then the $\ad$-invariance follows from the associativity
of the multiplication in $\alg$.
\end{lemma}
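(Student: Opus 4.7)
The plan is to verify the defining identity \eqref{adinv} directly from the stated hypotheses in each of the two cases. In both parts, the target equality is
\[
\Tr\bra{[a,c]\,b} + \Tr\bra{c\,[a,b]} = 0,
\]
so the whole task is to manipulate the product appearing inside the trace until either the Leibniz rule or the cyclic property \eqref{abba} collapses it.

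For the first assertion, I would start by applying the Leibniz rule \eqref{derm} to the element $cb\in\alg$:
\[
\brac{a, c b} = \brac{a, c}\,b + c\,\brac{a, b}.
\]
Taking $\Tr$ of both sides and using the hypothesis $\Tr[a,\cdot] = 0$ on the left, the right hand side becomes $\Tr\bra{[a,c]\,b} + \Tr\bra{c\,[a,b]}$, which must therefore vanish. Rewriting this in terms of the scalar product \eqref{sc} gives $\bra{\brac{a,c},b}_\alg + \bra{c,\brac{a,b}}_\alg = 0$, which is exactly the $\ad$-invariance \eqref{adinv}.

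For the second assertion I would substitute the commutator $[a,b]=ab-ba$ directly and expand, using only associativity to regroup factors:
\[
\brac{a,c}\,b + c\,\brac{a,b} = (ac-ca)\,b + c\,(ab-ba) = a c b - cab + cab - cba = a(cb) - (cb)a.
\]
Applying $\Tr$ and using the cyclic identity \eqref{abba} with the pair $(x,y)=(a,cb)$, the two terms coincide and cancel, so $\bra{\brac{a,c},b}_\alg + \bra{c,\brac{a,b}}_\alg = 0$ again.

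There is no real obstacle here; the only point requiring a bit of care is to keep track of parentheses in the second part, since the cancellation of the middle terms $-cab+cab$ (as well as the identification $\Tr(a(cb))=\Tr((cb)a)$) is precisely where associativity of the multiplication and the cyclic property of the trace are used. The vanishing condition $\Tr[a,b]=0$ assumed in the first part is automatic in the commutator case thanks to \eqref{abba}, which is consistent with the fact that part two needs fewer hypotheses than part one.
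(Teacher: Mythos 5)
Your proof is correct and follows essentially the same route as the paper: part one applies the Leibniz rule to $\Tr\brac{a,cb}$ and invokes the vanishing of traces of Lie brackets, and part two expands the commutator and uses associativity together with the cyclic property \eqref{abba}. The paper's own argument is identical in substance (it merely leaves the second part as "immediate"), so your explicit expansion is just a more detailed rendering of the same computation.
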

\begin{proof}
The first part of the lemma is immediate, as
\begin{align*}
\bra{\brac{a,c},b}_\alg+\bra{c,\brac{a,b}}_\alg = \Tr
\bra{\brac{a,c}b+c\brac{a,b}} = \Tr \brac{a,cb} = 0,
\end{align*}
where we used the assumptions from the proposition. The second
statement of the lemma follows immediately from the assumption and \eqref{abba}.
\end{proof}

Under the assumption that we have an algebra $\alg$ such that a Lie
bracket is a derivation of a multiplication \eqref{derm} and $\alg$
is endowed with a trace form inducing a nondegenerate
$\ad$-invariant scalar product \eqref{sc}, the most natural Casimir
functions $\C_n(L)\in \smf(\alg)$ of the Lie-Poisson bracket
\eqref{nliepo} are given by the traces of powers of $L$, i.e.,
\begin{equation}\label{casq}
C_n(L) = \frac{1}{n+1} \Tr \bra{L^{n+1}}\quad \iff \quad dC_n = L^n
\qquad n\neq -1.
\end{equation}
The associated differentials are found from the expression
\eqref{grad}, which can be now reduced to
\begin{equation}\label{difft}
\der{t}F(L) = \bra{L_t, dF}_\alg = \Tr \bra{L_t\ dF}\qquad L\in
\alg,
\end{equation}
where $t$ is an evolution parameter associated with a vector field
$L_t$ on $\alg$.

\subsection{Hamiltonian structures on Poisson algebras}\label{pa}

\begin{definition}
Let $\Alg$ be a commutative, associative algebra with unit $1$. If
there is a Lie bracket on $\Alg$ such that for each element $a\in
\Alg$ the operator $\ad_a:b\map \{a,b\}$ is a derivation of the
multiplication, i.e. $\{a,bc\} = \{a,b\}c+b\{a,c\}$ then
$(\Alg,\{\cdot,\cdot\})$ is called a Poisson algebra and the bracket
$\pobr{\cdot,\cdot}$ is a Poisson bracket.
\end{definition}
Thus, the Poisson algebras are Lie algebras with the Lie bracket
$[\cdot,\cdot]:=\{\cdot,\cdot\}$ endowed with an additional
structure. Of course we should not confuse the above bracket with
the Poisson brackets in the algebra of scalar fields. It will follow
easily from the context which bracket is used. In the case of the
Poisson algebra $\Alg$ a classical $R$-matrix defines the second Lie
product on $\Alg$ but not the Poisson bracket; in general this would
not be possible.

\begin{theorem}\cite{Li}
Let $\Alg$ be a Poisson algebra with the Poisson bracket
$\{\cdot,\cdot\}$ and a non-degenerate $\ad$-invariant scalar
product $(\cdot,\cdot)_\Alg$ such that the operation of
multiplication is symmetric with respect to the latter, i.e.,
$(ab,c)_\Alg = (a,bc)_\Alg$ for all $a,b,c\in \Alg$. Assume that $R$
is a classical $R$-matrix such that \eqref{assum} holds.

Then for any integer $n\me 0$ the formula
\begin{equation}\label{pobr}
\pobr{H,F}_n = \bra{L, \pobr{R(L^ndF),dH}+\pobr{dF,R(L^ndH)}}_\Alg,
\end{equation}
where $H,F$ are smooth functions on $\Alg$, defines a Poisson
structure on $\Alg$. Moreover, all brackets $\{\cdot,\cdot\}_n$ are
compatible.
\end{theorem}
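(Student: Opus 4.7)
\quad My plan is to verify the Poisson axioms for $\{\cdot,\cdot\}_n$ directly. Bilinearity is manifest, skew-symmetry is immediate from the skew-symmetry of the inner Poisson bracket $\{\cdot,\cdot\}$ on $\Alg$ (swapping $H$ and $F$ interchanges the two summands in the definition and negates each one), and the Leibniz rule $\{H,F_1 F_2\}_n = F_1\{H,F_2\}_n + F_2\{H,F_1\}_n$ follows from linearity of the formula in $dF$ together with $d(F_1 F_2) = F_1\,dF_2 + F_2\,dF_1$. The whole problem therefore reduces to the Jacobi identity and, afterwards, to compatibility between different values of $n$.

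I would first recast the bracket in Hamiltonian form $\{H,F\}_n = (dF,\pi_n\,dH)_\Alg$ by repeated use of the $\ad$-invariance $(\{c,a\},b)_\Alg = -(a,\{c,b\})_\Alg$ and the multiplicative symmetry $(ab,c)_\Alg = (a,bc)_\Alg$, obtaining
\begin{equation*}
\pi_n\,dH = \{R(L^n\,dH),L\} + L^n R^*\{dH,L\},
\end{equation*}
which reduces to the second Lie-Poisson tensor in \eqref{lin} when $n=0$. Then, in direct parallel with the auxiliary lemma preceding Theorem~\ref{nlp}, I would derive an explicit formula for $d\{H,F\}_n$. Taking the directional derivative along $\xi\in\Alg$ yields three types of contribution: from the outer $L$, from each occurrence of $L^n$ (via $(L^n)'[\xi] = n L^{n-1}\xi$, which is available because $\Alg$ is commutative), and from the $L$-dependence of $dF$ and $dH$. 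The hypothesis \eqref{assum} that $R$ commutes with directional derivatives is essential to pull $R$ through the variation, while the symmetry \eqref{dd2} and $\ad$-invariance repackage everything into a closed form.

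The main obstacle is the Jacobi identity. Substituting the formula for $d\{H,F\}_n$ into $\{F,\{G,H\}_n\}_n + \text{c.p.}$ generates a large collection of terms which I expect to split into three families cancelling for independent reasons. The first family consists of those terms in which the $R$'s and $L^n$'s can be collected outside the brackets; the geometric remainder is annihilated by the Jacobi identity of $\{\cdot,\cdot\}$ on $\Alg$, exactly as in the proof of Theorem~\ref{nlp}. The second family repackages into the combination $R\{a,b\}_R - \{Ra,Rb\}$ and is killed by the Yang-Baxter equation \eqref{YB}, mirroring the use of YB in Proposition~\ref{ybp}. The third, genuinely new family consists of $L^n$-weighted terms coming from differentiating the powers of $L$; these should cancel pairwise after shifting $L^n$ across brackets and across the two slots of the scalar product, using the Leibniz rule for $\{\cdot,\cdot\}$ together with the multiplicative symmetry of $(\cdot,\cdot)_\Alg$. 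The heart of the calculation, and the hardest bookkeeping step, is precisely matching up these $L^n$-weighted terms; the multiplicative symmetry of the scalar product, assumed in the theorem, is the crucial ingredient that makes the matching work.

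For compatibility of $\{\cdot,\cdot\}_n$ with $\{\cdot,\cdot\}_m$, it suffices to verify the Jacobi identity for the linear combination $\{\cdot,\cdot\}_n + \lambda\{\cdot,\cdot\}_m$ for every $\lambda\in\Km$, which reduces to the mixed cocycle identity $\{F,\{G,H\}_n\}_m + \{F,\{G,H\}_m\}_n + \text{c.p.} = 0$. A slick route is to observe that the whole Jacobi argument sketched above goes through verbatim with $L^n$ replaced by an arbitrary polynomial $p(L)$ in $L$, since none of the cancellation mechanisms (Jacobi for $\{\cdot,\cdot\}$, Yang-Baxter for $R$, $\ad$-invariance, multiplicative symmetry) sees the specific power of $L$. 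Applying this with $p(L) = L^n + \lambda L^m$ immediately yields compatibility.
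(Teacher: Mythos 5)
First, a point of comparison: the paper does not actually prove this theorem. It is quoted from Li \cite{Li}, and the only commentary offered is the remark immediately following the statement, namely that the hypothesis \eqref{assum} is implicitly used in the proof of Theorem~4.2 of \cite{Li} and that the multiplicative symmetry of the scalar product yields a trace form via $\Tr(a):=(a,1)$. So there is no in-paper argument to measure you against; your proposal has to stand on its own.

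On its own terms, your proposal has two genuine problems. The first is that the central step --- the Jacobi identity --- is described rather than carried out. You correctly recover the Poisson operator $\pi_n dH=\pobr{R(L^ndH),L}+L^nR^*\pobr{dH,L}$ of \eqref{pot}, and your taxonomy of the terms in $\pobr{F,\pobr{G,H}_n}_n+{\rm c.p.}$ into three families is plausible, but the claim that the $L^n$-weighted terms ``should cancel pairwise after shifting $L^n$ across brackets'' is precisely the hard bookkeeping that constitutes the proof; asserting that you expect it to work is not a proof of it. The second problem is a hypothesis mismatch: you propose to kill your second family of terms with the Yang--Baxter equation \eqref{YB}, but the theorem does not assume \eqref{YB} --- it assumes only that $R$ is a classical $R$-matrix (i.e., that $\brac{\cdot,\cdot}_R$ is a Lie bracket) together with \eqref{assum}. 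As \eqref{jacr} shows, the classical $R$-matrix property is strictly weaker than \eqref{YB}, and for $n=0$ the bracket \eqref{pobr} is just the Lie--Poisson bracket of $\brac{\cdot,\cdot}_R$, whose Jacobi identity follows from Theorem~\ref{nlp} with no appeal to \eqref{YB} at all. So either your second family in fact assembles into expressions of the form $\brac{a,R\brac{b,c}_R-\brac{Rb,Rc}}+{\rm c.p.}$, in which case you should invoke \eqref{jacr} rather than \eqref{YB}, or your decomposition is producing terms the hypotheses cannot cancel and the bookkeeping must be redone. Your closing idea --- proving the Jacobi identity for an arbitrary polynomial $p(L)$ in place of $L^n$ and deducing compatibility by taking $p(L)=L^n+\lambda L^m$ --- is a sound and economical way to get compatibility for free, but it is of course conditional on actually completing the main computation.
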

An important property that classical $R$-matrices commute with
differentials of smooth maps from $\Alg$ to $\Alg$, or equivalently
satisfy \eqref{assum}, is used in the proof of Theorem~4.2 of
\cite{Li}, although it is not explicitly stated there. In fact, the
existence of scalar product being symmetric with respect to the
multiplication, $(ab,c)=(a,bc)$, entails existence of a trace form
on $\Alg$. Setting $c=1$ we have $(ab,1)=(a,b)$. Thus, the trace can
be defined as $\Tr (a) := (a, 1) = (1, a)$.

The Poisson operators $\pi_n$ related to the Poisson brackets
\eqref{pobr} such that $\pobr{H,F}_n = (dF,\pi_ndH)$, are given by
the following Poisson maps
\begin{equation}\label{pot}
\pi_n: dH\map \pobr{R(L^ndH),L} + L^n R^*\bra{\pobr{dH,L}}\qquad
n\me 0.
\end{equation}
Notice that the bracket \eqref{pobr} with $n=0$ is just the
Lie-Poisson bracket with respect to the second Lie bracket on $\Alg$
defined by a classical $R$-matrix. Referring to the dependence on
$L$, the Poisson maps \eqref{pot} are called linear for $n=0$, quadratic
for $n=1$, and cubic for $n=2$, respectively. The Casimir functions
$C(L)$ of the natural Lie-Poisson bracket are in involution with
respect to all Poisson brackets \eqref{pot} and generate pairwise
commuting Hamiltonian vector fields of the form
\begin{equation*}
  L_t = \pi_n dC = \pobr{R \bra{L^n dC}, L}\qquad L\in \Alg.
\end{equation*}

Taking the most natural Casimir functions \eqref{casq}, defined by
traces of powers of $L$, for the Hamiltonians, one finds a hierarchy
of evolution equations which are multi-Hamiltonian dynamical
systems:
\begin{equation}\label{eveq}
L_{t_q} = \pobr{R dC_n,L} = \pi_{0} dC_n = \pi_1 dC_{n-1} = \dots =
\pi_l dC_{n-l} = \dots,
\end{equation}
where $C_n$ are such that $dC_n = LdC_{n-1}$. For any $R$-matrix any
two evolution equations in the hierarchy \eqref{eveq} commute
because of involutivity of the Casimir functions $C_q$. Each
equation admits all the Casimir functions as conserved quantities in
involution. In this sense we will consider \eqref{eveq} as a
hierarchy of integrable evolution equations. The most natural choice
for the Casimir functions are the traces of the power functions
\eqref{casq}.

\subsection{Hamiltonian structures on noncommutative algebras}\label{na}

In this section, in contrast with the previous one, we will consider
a noncommutative associative algebra $\alg$, with unity, for which the Lie
structure is defined as a commutator, i.e., $\brac{a,b}:= a b - b a$, where $a,b\in\alg$.
Such a Lie bracket automatically satisfies the required Leibniz rule
\eqref{derm}. We further assume existence of nondegenerate,
symmetric and $\ad$-invariant scalar product on $\alg$. Let $R$ be a
classical $R$-matrix such that \eqref{assum} is satisfied.

In this case the situation is more involved and only three explicit
forms of Poisson brackets on the space of smooth functions
$\smf(\alg)$ defined by related Poisson tensors are known from the
literature:
\begin{equation*}
\pobr{H,F}_n = \bra{dF,\pi_n dH}_\alg\qquad n=0,1,2.
\end{equation*}
These Poisson brackets (or associated tensors) are called linear,
quadratic and cubic brackets (resp.\ tensors) for $n=0,1,2$,
respectively.

The linear one is simply the Lie-Poisson bracket, with respect to
the second Lie structure on $\alg$ defined by classical $R$-matrix,
with the Poisson tensor \eqref{lin}
\begin{equation}\label{pot1}
\pi_0 dH = \brac{R dH, L} + R^* \brac{dH, L},
\end{equation}
for which there is no need for additional assumptions.

In our further considerations we have to assume that the scalar product
is symmetric with respect to the operation of multiplication,
$(ab,c)=(a,bc)$. Note that this property implies that the scalar
product is automatically $\ad$-invariant with respect to the Lie
bracket defined by the commutator (Lemma \ref{adtr}).

The quadratic case is more delicate. A quadratic tensor \cite{Suris}
\begin{equation}\label{pot2a}
  \pi_1 dH = A_1 (L dH) L - L  A_2 (dH L) + S(dH L)  L - L S^* (L dH)
\end{equation}
defines a Poisson tensor if the linear maps $A_{1,2}: \alg \arrow\alg$
are skew-symmetric, $A_{1,2}^*=-A_{1,2}$, satisfy YB($\alpha$) \eqref{YB} for $\alpha \neq 0$ and the linear
map $S: \alg \arrow \alg$ with its adjoint $S^*$ satisfy
\begin{equation}\label{cc}
S\bra{\brac{ A_2 a, b} + \brac{a,A_2 b}} = \brac{S a, S b},\quad S^* \bra{\brac{ A_1 a, b} + \brac{a,A_1 b}} = \brac{S^*a, S^*b}.
\end{equation}
In the special case when
\begin{equation}\label{rr}
  \widetilde{R} := \frac{1}{2} \bra{R-R^*}
\end{equation}
satisfies the YB($\alpha$), for the same $\alpha$ as $R$, under the
substitution $A_1 = A_2 = R-R^*$ and $S=S^*=R+R^*$
the quadratic Poisson operator \eqref{pot2a} reduces to
\cite{Oevel1,Li1}
\begin{align}\label{pot2b}
  \pi_1 dH = \brac{R \brac{dH,L}_+,L} + L R^*\brac{dH,L} + R^*\bra{\brac{dH,L}}L,
\end{align}
where $\brac{a,b}_+ := ab+ba$ and the conditions \eqref{cc} are equivalent to YB($\alpha$) for $R$
and $\widetilde{R}$. In particular, when $R^*=-R$, these
conditions are automatically satisfied as in this case
$\widetilde{R} = R$.

Another special case occurs when the maps $A_{1,2}$ and $S$
originate from the decomposition of a given classical $R$-matrix
satisfying YB($\alpha$) for $\alpha \neq 0$
\begin{equation*}
  R = \frac{1}{2}\bra{A_1 + S} = \frac{1}{2}\bra{A_2 + S^*},
\end{equation*}
where $A_{1,2}$ are skew-symmetric. Then, the conditions \eqref{cc}
imply that both $A_1$ and $A_2$ satisfy YB($\alpha$) for the same
value of $\alpha$ as $R$, \cite{Oevel4}. Hence, in this case we only
have to check the conditions \eqref{cc} for \eqref{pot2a} to be a
Poisson operator. The latter now takes the form
\begin{align}\label{pot2c}
\pi_1: dH\map 2R\bra{L dH}L-2LR\bra{dH L} + S\bra{\brac{dH,L}}L
  + L S^* \brac{dH,L}.
\end{align}

Finally, the cubic tensor $\pi_2$ takes the simple form
\cite{Oevel1}
\begin{equation*}\label{pot3}
  \pi_2: dH\map \brac{R\bra{L dH L},L} + L R^*\bra{\brac{dH,L}}L
\end{equation*}
and is Poisson without any further additional assumptions.

Once again, taking the Casimir functions \eqref{casq} defined by the
traces of powers of $L$ for the Hamiltonians yields a hierarchy of
evolution equations which are tri-Hamiltonian dynamical systems,
\begin{equation}\label{eveq2}
L_{t_n} = \brac{R dC_n,L} = \pi_{0} dC_n = \pi_1 dC_{n-1} = \pi_2
dC_{n-2},
\end{equation}
where $C_n$ are such that $dC_n = LdC_{n-1}$. We assumed that
$\pi_2$ in \eqref{eveq2} is given by \eqref{pot2b} or \eqref{pot2c}.
In the first case all three Poisson tensors in \eqref{eveq2} are
automatically compatible. In the second case this has to be checked
separately.

\subsection{Central Extension Approach}\label{cea}

Let $\alg$ be a Lie algebra with the Lie bracket $[\cdot,\cdot]$.
Consider its extension
 $\calg:= \alg\oplus\Km$ with the Lie bracket given by
\begin{equation}\label{celb}
 \brac{(a,\alpha),(b,\beta)}\equiv \cad_{(a,\alpha)}(b,\beta):= \bra{\brac{a,b},\omega\bra{a,b}}\qquad a,b\in\alg\quad\alpha,\beta\in\Km,
\end{equation}
where $\cad$ is the associated adjoint action. It is readily seen
that \eqref{celb} is a well-defined Lie bracket if and only if
$\omega$ is a two-cocycle.

\begin{definition}
A two-cocycle on $\alg$ is a bilinear map
$\omega:\alg\times\alg\arrow\Km$ such that
\begin{enumerate}
\item[(i)] it is skew-symmetric: $\omega\bra{a,b} = -\omega\bra{b,a}$,
\item[(ii)] and it satisfies the cyclic condition:
\begin{equation}\label{jc}
\omega\bra{[a,b],c} + \omega\bra{[c,a],b} + \omega\bra{[b,c],a} = 0,
\end{equation}
\end{enumerate}
where $a,b,c\in\alg$.
\end{definition}

Notice that $(0,\alpha)\in\calg$ commute with respect to
\eqref{celb} with all other elements from $\calg$ and hence $\alg$
can be identified with $\alg\oplus\alpha$ for fixed $\alpha\in\Km$.
The value $\alpha$ is often called a charge. In fact
$\alg\cong\calg/\mathfrak{c}$, where $\mathfrak{c} =
\pobr{(0,\alpha)\in\calg : \alpha\in\Km}$ is in the center of
$\calg$, and thus the Lie algebra $\calg$ is called a {\it central
extension} of $\alg$.

Assume now (for simplicity) that $\alg$ can be identified with
$\alg^*$ through a non-degenerate symmetric scalar product
\eqref{sym}. Then this product can be extended to the algebra
$\calg$ in the following fashion:
\begin{equation}\label{csym}
    \bra{(a,\alpha),(b,\beta)}_{\calg} := \bra{a,b}_\alg + \alpha\beta\qquad a,b\in\alg\quad \alpha,\beta\in\Km.
\end{equation}
Of course, \eqref{csym} is symmetric and an important fact is that
it preserves non-degeneracy. Thus, $\calg^*$ can be identified with
$\calg$.

Usually two-cocycles are defined through the scalar product on
$\alg$ and a linear map $\phi:\alg\arrow\alg$ such that
\begin{equation}\label{2c}
    \omega\bra{a,b} = \bra{a, \phi(b)}_\alg.
\end{equation}
If the linear map $\phi:\alg\arrow\alg$ is skew-symmetric, i.e.,
$\phi^*=-\phi$, and the following condition
\begin{equation}\label{1c}
    \phi\bra{\brac{a,b}} = \ad^*_a\phi(b) - \ad^*_b\phi(a)
\end{equation}
is satisfied, then it is called a {\it one-cocycle}.

\begin{proposition}
The bilinear form given by \eqref{2c} is a two-cocycle if and only
if $\phi$ is a one-cocycle. Moreover, if the symmetric product on
$\alg$ is $ad$-invariant \eqref{adinv} then \eqref{2c} is a
two-cocycle if and only if the skew-symmetric $\phi$ is a derivation
of the Lie bracket $\brac{\cdot,\cdot}$ in $\alg$, i.e.,
\begin{equation}\label{dm}
    \phi\bra{\brac{a,b}} = \brac{\phi(a),b} + \brac{a,\phi(b)}
\end{equation}
holds.
\end{proposition}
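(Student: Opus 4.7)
The plan is to verify the two defining properties of a two-cocycle---skew-symmetry and the cyclic identity \eqref{jc}---directly for $\omega(a,b) = \bra{a,\phi(b)}_\alg$, and to show that each one is equivalent, under non-degeneracy of $\bra{\cdot,\cdot}_\alg$, to the corresponding defining property of a one-cocycle: skew-symmetry of $\phi$ and the identity \eqref{1c}. The ``moreover'' statement will then follow immediately by rewriting \eqref{1c} under the extra hypothesis \eqref{adinv}.

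First I would establish the skew-symmetry correspondence. Computing $\omega(a,b) + \omega(b,a)$ and using symmetry of the scalar product together with the definition of the adjoint $\phi^*$ gives $\bra{a,(\phi+\phi^*)(b)}_\alg$. By non-degeneracy, this vanishes for all $a,b\in\alg$ iff $\phi^* = -\phi$, handling the first axiom in both directions at once.

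Second, assuming $\phi$ is skew-symmetric, I would pair the putative cocycle identity \eqref{1c} with an arbitrary $c\in\alg$. Using $\bra{\ad^*_a x, c}_\alg = -\bra{x,\brac{a,c}}_\alg$ (which is \eqref{coad} read off under the identification $\alg^*\cong\alg$), the pairing yields
\[\bra{\phi(\brac{a,b}),c}_\alg + \bra{\phi(b),\brac{a,c}}_\alg - \bra{\phi(a),\brac{b,c}}_\alg = 0.\]
Now symmetry of the scalar product rewrites each term as a value of $\omega$: the first becomes $\omega(c,\brac{a,b}) = -\omega(\brac{a,b},c)$ by skew-symmetry of $\omega$; the second becomes $\omega(\brac{a,c},b) = -\omega(\brac{c,a},b)$ by skew-symmetry of the Lie bracket; and the third is $-\omega(\brac{b,c},a)$. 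Collecting signs recovers exactly \eqref{jc}, and non-degeneracy makes the equivalence two-way.

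Finally, for the last statement, \eqref{adinv} gives $\bra{\brac{a,c},b}_\alg = -\bra{c,\brac{a,b}}_\alg = \bra{\ad^*_a c,b}_\alg$, which, by non-degeneracy, identifies $\ad^*_a$ with $\ad_a$ on $\alg$. Substituting into \eqref{1c} yields $\phi(\brac{a,b}) = \brac{a,\phi(b)} - \brac{b,\phi(a)}$, and skew-symmetry of the Lie bracket converts the last term into $\brac{\phi(a),b}$, reproducing \eqref{dm}. There is no serious obstacle here---the only care point is bookkeeping the signs that arise from the three skew-symmetries (of $\omega$, of the Lie bracket, and of $\phi$) and from the minus sign built into $\ad^*$---so the argument is essentially mechanical.
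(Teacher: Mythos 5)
Your proposal is correct and follows essentially the same route as the paper's proof: skew-symmetry of $\omega$ is matched to $\phi^*=-\phi$, the cyclic condition \eqref{jc} is converted into \eqref{1c} by pairing against an arbitrary element and invoking non-degeneracy together with the defining relation \eqref{coad} of $\ad^*$, and the final claim follows from identifying $\ad^*$ with $\ad$ under \eqref{adinv}. The only cosmetic difference is the direction of the computation (you start from \eqref{1c} and recover \eqref{jc}, the paper goes the other way), which is immaterial since non-degeneracy makes the equivalence two-sided in either presentation.
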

\begin{proof}
It is clear that $\eqref{2c}$ is skew-symmetric if and only if
$\phi^*=-\phi$. The cyclic condition \eqref{jc} for skew-symmetric
$\phi$ has the form
\begin{align*}
 \omega\bra{\brac{a,b},c} + c.p. &= -\bra{\phi\bra{\brac{a,b}},c}_\alg + \bra{\brac{b,c},\phi(a)}_\alg + \bra{\brac{c,a},\phi(b)}_\alg\\
     &= -\bra{\phi\bra{\brac{a,b}},c}_\alg - \bra{\ad^*_b\phi(a),c}_\alg + \bra{\ad^*_a\phi(b),c}_\alg = 0,
\end{align*}
where we used the definition \eqref{coad} of coadjoint action. Now,
since the symmetric product on $\alg$ is non-degenerate, the cyclic
condition is equivalent to \eqref{1c}. For the $\ad$-invariant
symmetric product \eqref{dm} follows from \eqref{1c} since $\ad^*$
is identified with $\ad$.
\end{proof}

The adjoint action \eqref{celb} does not depend on $\alpha$, thus in
fact it defines adjoint action of $\alg$ on $\calg$. Hence, we can
omit dependence on the charge and write $\cad_a\equiv
\cad_{(a,\alpha)}$. When a given two-cocycle is defined by means of
a one-cocycle, i.e., \eqref{2c} holds, then we can write explicitly
the coadjoint action $\cad^*$, since
\begin{align*}
 \bra{\cad_b^*(a,\alpha),(c,\gamma)}_{\calg} &:=-\bra{(a,\alpha),\cad_b(c,\gamma)}_{\calg}
    = -\bra{a,\ad_bc}_\alg - \alpha \bra{b,\phi(c)}_\alg\\
     &= \bra{\ad_b^*a+\alpha\phi(b),c}_\alg = \bra{(\ad^*_ba+\alpha\phi(b),0),(c,\gamma)}_{\calg}.
\end{align*}
Hence, we can restrict $\cad^*$ to $\alg$ and define the central
extension of coadjoint action of $\alg^*$ on $\alg$, i.e., $\cad_b^* := \ad_b^* + \alpha \phi(b)$ for $b\in\alg$, where the charge $\alpha\in\Km$ is now treated as a parameter.

According to Theorem \ref{nlp} the Lie bracket \eqref{celb} on
$\calg$ defines a Lie-Poisson bracket on the space of smooth
functions on $\calg^*\cong\calg$, i.e., on $\smf(\calg)$. This
Poisson bracket can be restricted to $\smf(\alg)$ considered as a
subspace of $\smf(\calg)$. Hence, at a point $(L,\alpha)\in\calg$ we
have
\begin{equation}\label{celp}
\pobr{H,F}(L) := \bra{(L,\alpha),\brac{(dF,0),(dH,0)}}_{\calg}
= \bra{L,\brac{dF,dH}}_{\alg} + \alpha\,\omega\bra{dF,dH},
\end{equation}
where $H,F\in\smf(\alg)$ and $\alpha\in\Km$. The Poisson bracket
\eqref{celp} is a central extension of the natural Lie-Poisson
bracket on $\smf(\alg)$ generated by the Lie algebra structure on
$\alg$. When the two-cocycle is given by \eqref{2c}, then the
associated Poisson tensor $\pi$ such that $\pobr{H,F} = \bra{dF,\pi
dH}_\alg$ has the form $\pi dH = \cad^*_{dH} L \equiv \ad^*_{dH}L + \alpha \phi(dH)$.

The second Lie bracket on $\calg=\alg\oplus\Km$, being an extension
of \eqref{rbra}, is defined by
\begin{equation}\label{scelb}
    \brac{(a,\alpha),(b,\beta)}_R := \bra{\brac{a,b}_R,\omega_R\bra{a,b}}\qquad a,b\in\alg\quad\alpha,\beta\in\Km,
\end{equation}
where
\begin{equation}\label{r2c}
    \omega_R(a,b) = \omega(Ra,b) + \omega(a,Rb),
\end{equation}
and $\omega$ is a two-cocycle from \eqref{celb}. Clearly,
\eqref{scelb} is a well-defined Lie bracket on $\calg$ if and only
if $R:\alg\arrow\alg$ is a classical $R$-matrix and \eqref{r2c} is a
two-cocycle with respect to the second Lie bracket on $\alg$
\eqref{rbra} induced by $R$.

\begin{proposition}\label{scc}
A sufficient condition on $R$ for \eqref{r2c} to be a two-cocycle
with respect to \eqref{rbra} is the Yang-Baxter equation \eqref{YB}.
\end{proposition}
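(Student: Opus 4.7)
The plan is to verify the two defining properties of a two-cocycle for $\omega_R$ with respect to the $R$-bracket $[\cdot,\cdot]_R$: skew-symmetry and the cyclic condition. Skew-symmetry is immediate from the definition \eqref{r2c}, since $\omega$ is skew-symmetric:
\begin{equation*}
\omega_R(a,b) = \omega(Ra,b) + \omega(a,Rb) = -\omega(b,Ra) - \omega(Rb,a) = -\omega_R(b,a).
\end{equation*}
Thus the real work is in establishing the cyclic identity
\begin{equation*}
\omega_R([a,b]_R, c) + \omega_R([c,a]_R, b) + \omega_R([b,c]_R, a) = 0.
\end{equation*}

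My strategy for the cyclic identity is the following. First, expand a single term using \eqref{r2c}:
\begin{equation*}
\omega_R([a,b]_R, c) = \omega\bigl(R[a,b]_R, c\bigr) + \omega\bigl([a,b]_R, Rc\bigr).
\end{equation*}
In the first summand I use YB($\alpha$), \eqref{YB}, to replace $R[a,b]_R$ by $[Ra,Rb] + \alpha[a,b]$. In the second summand I expand $[a,b]_R = [Ra,b] + [a,Rb]$ directly. This yields four pieces per cyclic rotation, so twelve terms in total.

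The cyclic sum of the $\omega([Ra,Rb],c)$-type pieces vanishes by the cyclic condition \eqref{jc} for $\omega$ applied to the triple $(Ra,Rb,Rc)$. Similarly the cyclic sum of the $\alpha\,\omega([a,b],c)$-pieces vanishes by \eqref{jc} applied to $(a,b,c)$. The remaining six terms of the form $\omega([Rx,y],Rz)$ and $\omega([x,Ry],Rz)$ split naturally into three groups, according to which pair of arguments has $R$ applied; for instance, the pair carrying $R$ only on $(a,b)$ consists of $\omega([Rb,c],Ra)$ and $\omega([c,Ra],Rb)$, which is exactly what one needs to close a cyclic identity for $\omega$ on the triple $(Ra,Rb,c)$, giving $-\omega([Ra,Rb],c)$. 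The other two groups work analogously, yielding $-\omega([Rb,Rc],a)$ and $-\omega([Rc,Ra],b)$. Summing these three outputs gives, by one more application of \eqref{jc} to $(Ra,Rb,Rc)$, exactly $0$, and the cyclic identity is proved.

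The main obstacle is really only bookkeeping: one has to resist the temptation to cycle prematurely and instead carefully partition the twelve terms so that four applications of the cocycle identity for $\omega$ (once on $(a,b,c)$, twice on $(Ra,Rb,Rc)$, and one application per mixed triple $(Ra,Rb,c)$, $(a,Rb,Rc)$, $(Ra,b,Rc)$) account for every term. Once the grouping is set up cleanly, no further assumption beyond YB($\alpha$) and the two-cocycle property of $\omega$ is needed, which is exactly the content of the proposition.
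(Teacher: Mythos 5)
Your overall architecture is the right one and, at its core, coincides with the paper's proof: expand $\omega_R\bra{\brac{a,b}_R,c}$ into $\omega\bra{R\brac{a,b}_R,c}+\omega\bra{\brac{a,b}_R,Rc}$, use \eqref{YB} to replace $R\brac{a,b}_R$ by $\brac{Ra,Rb}+\alpha\brac{a,b}$, kill the $\alpha$-terms with \eqref{jc} on $(a,b,c)$, and pair the six mixed terms so that each pair closes a cocycle identity on a mixed triple such as $(Ra,Rb,c)$. However, two of your three vanishing claims are false as stated. The cyclic condition \eqref{jc} applied to the triple $(Ra,Rb,Rc)$ reads
\begin{equation*}
\omega\bra{\brac{Ra,Rb},Rc}+\omega\bra{\brac{Rc,Ra},Rb}+\omega\bra{\brac{Rb,Rc},Ra}=0,
\end{equation*}
with $R$ on the \emph{third} argument as well; it says nothing about
\begin{equation*}
T:=\omega\bra{\brac{Ra,Rb},c}+\omega\bra{\brac{Rb,Rc},a}+\omega\bra{\brac{Rc,Ra},b},
\end{equation*}
which is what the YB substitution actually produces, and $T\neq 0$ in general. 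For instance, on $\alg^\la=sl(2,\Cm)\bc{\la,\la^{-1}}$ with the Kac--Moody two-cocycle $\omega(a,b)=\res\,\tr\bra{a\,\pr_\la b}$ and $R=P_+-\tfrac12$, taking $a=e\la^{-1}$, $b=f\la$, $c=h$ gives $T=-1$. By the same token your closing step --- that $-\omega\bra{\brac{Ra,Rb},c}-\omega\bra{\brac{Rb,Rc},a}-\omega\bra{\brac{Rc,Ra},b}$ vanishes ``by one more application of \eqref{jc} to $(Ra,Rb,Rc)$'' --- is also false: that sum equals $-T$.

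The repair is one line and is precisely what the paper does: do not discard the $\omega\bra{\brac{Ra,Rb},c}$-pieces at the outset. Their cyclic sum is $T$, your (correct) reduction of the six mixed terms via \eqref{jc} on the mixed triples produces exactly $-T$, and the two cancel identically with no further appeal to \eqref{jc}. Equivalently, group each $\omega\bra{\brac{Ra,Rb},c}$ together with $\omega\bra{\brac{Rb,c},Ra}$ and $\omega\bra{\brac{c,Ra},Rb}$ and apply \eqref{jc} once per mixed triple, so that only three applications of \eqref{jc} to mixed triples plus one to $(a,b,c)$ are needed; the fully $R$-dressed triple $(Ra,Rb,Rc)$ never enters. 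With that correction your argument is the paper's proof.
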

\begin{proof}
Skew-symmetry of \eqref{r2c} is obvious. Hence, it suffices to
verify the cyclic condition for the two-cocycles. Thus, from the
Yang-Baxter equation \eqref{YB} we have that
\begin{align*}
     &\omega_R\bra{\brac{a,b}_R,c} + c.p. = \omega\bra{R\brac{a,b}_R,c} + \omega\bra{\brac{a,b}_R,Rc} + c.p.\\
    &\qquad\quad = \omega\bra{\brac{Ra,Rb},c} + \alpha \omega\bra{\brac{a,b},c} + \omega\bra{\brac{Ra,b},Rc} + \omega\bra{\brac{a,Rb},Rc} + c.p.\\
     &\qquad\quad = \omega\bra{\brac{Ra,Rb},c} + \omega\bra{\brac{Rb,c},Ra} + \omega\bra{\brac{c,Ra},Rb} +c.p. = 0,
\end{align*}
where the last two equalities hold because $\omega$ is a
two-cocycle.
\end{proof}

Hence, the second Lie-Poisson bracket on $\smf(\alg)$ at a point
$(L,\alpha)\in\calg$ has the form
\begin{align}
\nonumber   \pobr{H,F}_R(L) &:= \bra{(L,\alpha),\brac{(dF,0),(dH,0)}_R}_{\calg}\\
\label{celp2}    &= \bra{L,\brac{dF,dH}_R}_{\alg} + \alpha\,\omega_R\bra{dF,dH},
\end{align}
where $H,F\in\smf(\alg)$ and $\alpha\in\Km$. When a two-cocycle is
given by \eqref{2c}, then the associated Poisson tensor $\pi_R$ such
that $\pobr{H,F}_R = \bra{dF,\pi_R dH}_\alg$ has the form
\begin{equation}\label{ept}
\pi_R dH = \cad^*_{RdH} L + R^*\cad^*_{dH} L \equiv \ad^*_{RdH}L + R^*\ad^*_{dH}L + \alpha \phi(RdH) + \alpha R^*\phi(dH).
\end{equation}
Notice that the higher-order Poisson tensors from Sections \ref{pa}
and \ref{na} do not survive the procedure of central extension. We
have the following straightforward extension of Theorem~\ref{rsts}.

\begin{theorem}
The Casimir functionals $C_n$ of \eqref{celp}, i.e.,
$C_n\in\smf(\alg)$ such that
\begin{equation}\label{t1}
    \pi dC_n = \ad^*_{dC_n}L + \alpha\, \phi(dC_n) = 0,
\end{equation}
are in involution with respect to \eqref{celp2} and hence generate
the following hierarchy of mutually commuting Hamiltonian evolution
equations on $\alg$:
\begin{equation}\label{t2}
    L_{t_n} = \pi_R dC_n = \ad^*_{RdC_n}L + \alpha\, \phi(RdC_n).
\end{equation}
\end{theorem}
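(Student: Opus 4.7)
The plan is to recognise this theorem as the direct specialisation of Theorem~\ref{rsts} to the centrally extended Lie algebra $\calg$. Define $\widehat{R}:\calg\arrow\calg$ by $\widehat{R}(a,\alpha):=(Ra,0)$; a short check shows that the induced second Lie bracket on $\calg$ is precisely \eqref{scelb}, and by Proposition~\ref{scc} this is a genuine Lie bracket. Under the identification $\calg^*\cong\calg$ provided by \eqref{csym}, the natural Lie-Poisson bracket on $\smf(\calg)$, restricted to functionals depending on $L$ only (with the charge $\alpha$ held fixed as a parameter), is exactly \eqref{celp}, and the corresponding $R$-Lie-Poisson bracket is \eqref{celp2}. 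The Casimir condition \eqref{t1} is precisely the assertion that $dC_n$ is $\cad^*$-invariant at $(L,\alpha)$, i.e.\ that $C_n$ is a Casimir of the natural Lie-Poisson structure on $\calg$. All hypotheses of Theorem~\ref{rsts} are therefore in force, and that theorem delivers involution of the $C_n$ under \eqref{celp2}, the Hamiltonian character of the flows, and their pairwise commutativity at one stroke.

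What remains is to translate the abstract flow formula into the explicit form \eqref{t2}. Using expression \eqref{ept} for $\pi_R$, one writes
\begin{equation*}
\pi_R dC_n = \cad^*_{R dC_n}L + R^*\,\cad^*_{dC_n}L,
\end{equation*}
and the Casimir condition \eqref{t1} immediately annihilates the second summand, leaving $\ad^*_{R dC_n}L + \alpha\,\phi(R dC_n)$, which is \eqref{t2}.

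For a reader who prefers to verify involution directly, the approach is to expand $\pobr{C_m,C_n}_R = \bra{L,\brac{dC_n,dC_m}_R}_\alg + \alpha\,\omega_R(dC_n,dC_m)$ using the definitions of $\brac{\cdot,\cdot}_R$ and \eqref{r2c}, then apply the $\ad$-invariance of $\bra{\cdot,\cdot}_\alg$ and the skew-symmetry $\phi^*=-\phi$ to rearrange everything into the symmetric combination
\begin{equation*}
\bra{\cad^*_{dC_m}L,\,R dC_n}_\alg - \bra{\cad^*_{dC_n}L,\,R dC_m}_\alg,
\end{equation*}
which vanishes term by term by \eqref{t1}. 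The only delicate point in this direct route is sign bookkeeping: each $\bra{L,\brac{\cdot,\cdot}}_\alg$ contribution must be paired with the matching $\alpha\,\omega(\cdot,\cdot)$ piece so that $\phi$ ultimately acts on the same argument that appears inside $[\cdot,L]$, which is what allows the extended coadjoint action $\cad^* = \ad^*+\alpha\phi$ to reassemble and then be killed by the Casimir property.
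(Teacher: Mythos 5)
Your proposal is correct and follows exactly the route the paper intends: the paper offers no separate proof, presenting the theorem as a ``straightforward extension of Theorem~\ref{rsts}'' to the centrally extended algebra $\calg$ with the $R$-matrix $(a,\alpha)\mapsto(Ra,0)$, whose induced bracket is \eqref{scelb} and whose Lie-Poisson brackets restrict to \eqref{celp} and \eqref{celp2}. Your identification of \eqref{t1} with $\cad^*$-invariance, the cancellation of $R^*\cad^*_{dC_n}L$ in \eqref{ept}, and the optional direct involution check are all sound and simply make explicit what the paper leaves implicit.
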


Consider an important special case of \eqref{2c}, when $\phi$ is a
derivation with respect to additional continuous space coordinate.
Consider once more a Lie algebra $\alg$ with a non-degenerate
symmetric $\ad$-invariant scalar product \eqref{sc} defined by means of a
trace form $\tr$ on $\alg$. Assume now that $\alg$ depends in a
nontrivial fashion on an additional continuous parameter $y\in\Si$,
which naturally generates the corresponding current operator algebra
\begin{equation}\label{ca}
    \ca = \smf\bra{\Si,\alg}
\end{equation}
of smooth maps from $\Si$ to $\alg$. On the current algebra $\ca$ we
define the following modified trace form $\Tr:\ca\arrow\Km$, such that
$\Tr(a) := \int_{\Si} \tr(a)\,dy$, where $a\in\ca$.
Then the scalar product reads
\begin{equation}
(a,b)_{\ca}:= \Tr(ab) = \int_{\Si} \tr(ab)\,dy \qquad a,b\in\ca.
\end{equation}
Thus, assuming that the derivative with respect to $y$ is a
derivation of the Lie bracket in $\ca$, i.e., \eqref{dm} with $\phi
=\pr_y$ holds, we can define the so-called Maurer-Cartan two-cocycle
\begin{equation}\label{mc}
    \omega\bra{a,b} = \bra{a, \pr_y b}_{\ca}\equiv \int_{\Si} \tr(a \pr_yb)\, dy\qquad a,b\in\ca.
\end{equation}

In this case the Casimir functions of the natural Lie-Poisson
bracket on the centrally extended Lie algebra satisfy \eqref{t1} in
the form of the so-called Novikov-Lax equation
\begin{equation}\label{novlax}
    \brac{dC_n,L} + \alpha\, \pr_y(dC_n) = 0.
\end{equation}
This follows from the invariance of the scalar product, as in this
case $\ad^*\cong\ad$. The differentials of Casimirs $dC_n$  generate
the following Lax hierarchy \eqref{t2}
\begin{equation}\label{ceh}
    L_{t_n} = \brac{RdC_n,L} + \alpha\,\pr_y(RdC_n) = \pi_R dC_n,
\end{equation}
where the Poisson tensor \eqref{ept} takes the form
\begin{equation}\label{ept2}
    \pi_R dH = \brac{RdH,L} + R^*\brac{dH,L} + \alpha\,\pr_y(RdH) + \alpha\,R^*\pr_y(dH).
\end{equation}

\subsection{Dirac reduction and homotopy formula}\label{dir}

It often happens that we need to restrict the dynamics under study
to a submanifold defined via some constraints. In such a case, a
question arises of whether and how one can reduce the Poisson tensors.

Assume that the (linear) phase space $\U= \U_1\oplus \U_2$ is
spanned by $\uf_1\in \U_1$ and $\uf_2\in \U_2$, i.e. $\uf = (\uf_1,
\uf_2)^\T$. We will only consider the simplest case of the Dirac
reduction given by the constraint $\uf_2=\cf$, where $\cf\in \U_2$ is an
arbitrary constant. In many cases considering
such constraint is
sufficient. Besides, more complicated constraints can always be
reduced by change of dependent variables to several constraints of
the above type.

The Hamiltonian system with the Poisson tensor before reduction has
the form
\begin{equation*}
    \pmatrx{cc}{\uf_1\\ \uf_2}_t = \pmatrx{cc}{\pi_{11}(\uf_1, \uf_2) & \pi_{12}(\uf_1, \uf_2)\\
    \pi_{21}(\uf_1, \uf_2) & \pi_{22}(\uf_1, \uf_2)}\pmatrx{cc}{\var{H}{\uf_1}\\[1mm] \var{H}{\uf_2}},
\end{equation*}
where we assume that $\pi_{22}$ is nondegenerate and hence
invertible. Taking the constraint $\uf_2 = \cf$ into consideration
we find that
\begin{equation*}
    0 = \left .(\uf_2)_t \right |_{\uf_2=\cf} = \pi_{21}(\uf_1,\cf)\var{H}{\uf_1}
    + \pi_{22}(\uf_1,\cf)\var{H}{\uf_2}.
\end{equation*}
Thus, we can express $\var{H}{\uf_2}$ in the terms of
$\var{H}{\uf_1}$ and put it into
\begin{align*}
    (\uf_1)_{t} = \pi_{21}(\uf_1,\cf)\var{H}{\uf_1} + \pi_{22}(\uf_1,\cf)\var{H}{\uf_2}
    =: \pi^{red}(\uf_1)\var{H}{\uf_1}.
\end{align*}
Hence the reduced tensor takes the form
\begin{equation}\label{red}
  \pi^{red}(\uf_1)= \pi_{11}(\uf_1, \cf) -
  \pi_{12}(\uf_1, \cf)\cdot \brac{\pi_{22}(\uf_1, \cf)}^{-1}\cdot \pi_{21}(\uf_1, \cf).
\end{equation}

\begin{lemma}\cite{Dirac}\label{dirac}
The operator \eqref{red} is a Poisson operator on the affine space
$\U_1\oplus\cf$.
\end{lemma}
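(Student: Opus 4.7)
The plan is to verify the two defining properties of a Poisson operator: skew-symmetry of $\pi^{red}$ and the Jacobi identity for the induced bracket $\pobr{F,G}^{red}:=\bra{\var{F}{\uf_1},\pi^{red}\var{G}{\uf_1}}_{\U_1}$. Skew-symmetry is a short block computation. From $\pi^*=-\pi$ in the decomposition $\U=\U_1\oplus\U_2$ one reads off $\pi_{11}^*=-\pi_{11}$, $\pi_{22}^*=-\pi_{22}$, $\pi_{12}^*=-\pi_{21}$, and consequently $(\pi_{22}^{-1})^*=-\pi_{22}^{-1}$. Then
$(\pi^{red})^*=\pi_{11}^*-\pi_{21}^*(\pi_{22}^{-1})^*\pi_{12}^*=-\pi_{11}+\pi_{12}\pi_{22}^{-1}\pi_{21}=-\pi^{red}$
after collecting signs.

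For the Jacobi identity I would use the \emph{admissible extension} trick that underlies the Dirac bracket. For each smooth function $F$ on the affine slice $\U_1\oplus\cf$ I would build a distinguished extension $\tilde F$ on $\U$ by imposing $\var{\tilde F}{\uf_1}\big|_{\uf_2=\cf}=\var{F}{\uf_1}$ and $\var{\tilde F}{\uf_2}\big|_{\uf_2=\cf}=-\pi_{22}^{-1}(\uf_1,\cf)\,\pi_{21}(\uf_1,\cf)\var{F}{\uf_1}$. This is precisely the condition that the Hamiltonian vector field $X_{\tilde F}=\pi\,\var{\tilde F}{\uf}$ has vanishing $\uf_2$-component on the constraint surface, i.e.\ is tangent to $\uf_2=\cf$. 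A direct $2\times 2$ block expansion of $\pobr{\tilde F,\tilde G}$, coupled with the skew-symmetry relations above, collapses the four terms into $\bra{\var{F}{\uf_1},\pi^{red}\var{G}{\uf_1}}$, so that $\pobr{\tilde F,\tilde G}\big|_{\uf_2=\cf}=\pobr{F,G}^{red}$; the reduced bracket is therefore realised as the restriction of the ambient bracket along admissible representatives.

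With this identification in hand, Jacobi for $\pi^{red}$ follows geometrically. The Jacobi identity for $\pi$ gives $X_{\pobr{\tilde F,\tilde G}}=\brac{X_{\tilde F},X_{\tilde G}}$ as vector fields on $\U$, and the Lie bracket of two vector fields tangent to $\uf_2=\cf$ is again tangent; hence $\pobr{\tilde F,\tilde G}$ is itself an admissible extension of $\pobr{F,G}^{red}$. Consequently $\pobr{F,\pobr{G,H}^{red}}^{red}=\pobr{\tilde F,\pobr{\tilde G,\tilde H}}\big|_{\uf_2=\cf}$, and summing over cyclic permutations of $(F,G,H)$ annihilates by Jacobi for $\pobr{\cdot,\cdot}$.

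The main technical obstacle I anticipate is twofold: showing that $\pobr{\tilde F,\tilde G}\big|_{\uf_2=\cf}$ depends only on $F$ and $G$ and not on the ambient extension chosen (so that modifying $\tilde F$ by terms proportional to the constraint does not change the restricted value), and justifying $X_{\pobr{\tilde F,\tilde G}}=\brac{X_{\tilde F},X_{\tilde G}}$ in the present infinite-dimensional functional setting where $\var{}{\uf}$ plays the role of the exterior differential. Both points are standard for second-class constraints, but their verification rests squarely on the relations $\pi_{ij}^*=-\pi_{ji}$ and on the invertibility of $\pi_{22}$ assumed in the statement, and this is where the real content of the lemma lies.
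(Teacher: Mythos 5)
The paper does not in fact prove this lemma: it declares the skew-symmetry of \eqref{red} obvious and explicitly omits the Jacobi identity, deferring to the ``tedious yet rather straightforward'' direct calculation of \cite{Dirac}. Your proposal is therefore a genuinely different route, and a sound one. The skew-symmetry computation is exactly what the paper has in mind (note $\pi_{12}^*=-\pi_{21}$, not $-\pi_{12}$, is indeed what $\pi^*=-\pi$ gives blockwise). For the Jacobi identity you replace the brute-force expansion of the cyclic sum by the geometric second-class-constraint argument: admissible extensions $\tilde F$ (realizable e.g.\ as $\tilde F=F+\dual{\lambda(\uf_1),\uf_2-\cf}$ with $\lambda=-\pi_{22}^{-1}\pi_{21}\var{F}{\uf_1}$) satisfy $\pobr{\tilde F,\tilde G}\big|_{\uf_2=\cf}=\bra{\var{F}{\uf_1},\pi^{red}\var{G}{\uf_1}}$ because the $\uf_2$-block of the pairing collapses to $\dual{f_2,\pi_{21}g_1+\pi_{22}g_2}=0$, and tangency of the $X_{\tilde F}$ then lets the ambient Jacobi identity descend. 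The trade-off is clear: the direct calculation the paper alludes to is self-contained and purely algebraic, hence insensitive to the functional-analytic meaning of ``vector field'' when the $\pi_{ij}$ are (integro-)differential operators, whereas your argument is shorter and explains \emph{why} the reduction works, at the price of the two technical points you correctly flag ($X_{\pobr{\tilde F,\tilde G}}=\brac{X_{\tilde F},X_{\tilde G}}$ and extension-independence in the formal variational setting). One small streamlining: once each $X_{\tilde F}$ is known to be tangent to $\uf_2=\cf$, the restriction $\pobr{\tilde F,K}\big|_{\uf_2=\cf}=X_{\tilde F}(K)\big|_{\uf_2=\cf}$ depends only on $K\big|_{\uf_2=\cf}$, so taking $K=\pobr{\tilde G,\tilde H}$ already yields $\pobr{F,\pobr{G,H}^{red}}^{red}$ without the intermediate claim that $\pobr{\tilde G,\tilde H}$ is itself an admissible extension.
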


The skew-symmetry of \eqref{red} is obvious. However the proof of
the Jacobi identity for \eqref{red} consists of tedious yet
rather straightforward calculations \cite{Dirac} and we will omit
it. In the finite-dimensional case the said proof is much simpler
(see for example \cite{blmar}).

Notice that if the inner product is not $\ad$-invariant or we use
the central extension procedure, then in general we do not know the
explicit form of the Casimir functions (like \eqref{casq} for
example) for the natural Lie Poisson bracket (or the extension
thereof). In such a case one has to look for the annihilators $dC$
of the Lie-Poisson tensor by directly solving the  equation $\pi dC
=0$. With $dC$ in hand, one can try to reconstruct the Casimir
functions $C$. The Poincar\'e lemma says that if the phase space $\U$
is linear or of the star shape ($\forall u\in\U \pobr{\varepsilon u:
0\me\epsilon\me1}\subset\U$), then each closed $k$-form is exact. In
particular, when $\U$ satisfies the condition of the Poincar\'e lemma,
we can reconstruct the Casimir functions $C\in\smf{(\alg^*)}$ from
their differentials $dC\in\alg$ using the homotopy formula
\cite{Olver}
\begin{equation}\label{hom}
    C(\eta) = \int_0^1 \dual{dC(\epsilon\eta), \eta}d\epsilon\qquad \eta\in\alg^*.
\end{equation}
Nevertheless, even when \eqref{hom} is not applicable, the functions
$C$ can be often reconstructed through explicit integrations.

\subsection{Lax hierarchies from loop algebras}\label{la}

Let $\alg$ be a Lie algebra over $\Km$ with the Lie bracket
$\brac{\cdot,\cdot}$. In contrast with Section \ref{seclax}, we do
not assume here any additional structures on $\alg$.

\begin{definition}\label{loop}
The loop algebra over $\alg$ is the algebra $\alg^\la
:=\alg\bc{\la,\la^{-1}}$ of formal Laurent series in the parameter
$\la\in\Km$ with the coefficients from $\alg$.
\end{definition}

It is easily seen that thanks to bi-linearity the operation in the
former algebra extends to the loop algebra. Thus, in our case we can
readily extend the Lie bracket $\brac{\cdot,\cdot}$ to the loop
algebra $\alg^\la$ by setting
\begin{equation}\label{liel}
    \brac{a\la^m, b\la^n} = \brac{a,b}\la^{m+n}\qquad a,b\in\alg\quad m,n\in \Z.
\end{equation}

There are two natural decompositions of $\alg^\la$ into the sum of
Lie subalgebras, i.e.
\begin{equation}\label{ls}
    \alg^\la = \alg^\la_+\oplus\alg^\la_- = \bra{\sum_{i\me k}u_i\la^i}\oplus\bra{\sum_{i<k}u_i\la^i}
\end{equation}
for $k=0$ and $1$. Thus for $k=0$ and $k=1$ we have well-defined
classical $R$-matrices~\eqref{rp}
\begin{equation}\label{rm}
    R = P_+ -\frac{1}{2}.
\end{equation}
The transformation $\la\mapsto\la^{-1}$ maps the case of $k=0$ into
that of $k=1$, and vice versa. For this reason in what follows we
restrict ourselves to considering the case of $k=0$ only, and hence
$P_+$ and $P_-$ will stand for projections onto nonnegative and
negative powers of $\la$. In fact, we have an infinite family of classical $R$-matrices
\begin{equation}\label{rn}
    R_n = R \la^n\qquad n\in\Z
\end{equation}
and the corresponding new Lie brackets on $\alg$ read
\begin{equation}\label{rbr}
    \brac{a,b}_{R_n} := \brac{R_n a,b} + \brac{a,R_n b},\qquad a,b\in\alg.
\end{equation}
The $R$-matrices \eqref{rn} are well-defined since $\la^n$ is an
intertwining operator and Proposition~\ref{int} holds.

Let $L$ be an element of $\alg$. We have the following Lax hierarchy
\begin{equation}\label{lh}
     L_{t_n}= \brac{\bra{\la^nL}_+, L} = -\brac{\bra{\la^nL}_-, L}\qquad n\in\Z.
\end{equation}
The commutativity of the flows \eqref{lh} for different $n$ follows
from the fact that \eqref{rm} satisfy the Yang-Baxter equation
\eqref{YB} and $X_n(L) = \la^nL$ are invariants \eqref{inv} such
that $\bra{X_m(L)}_{t_n} = \brac{RX_m(L), X_n(L)}$ still holds. The details of computations
are parallel to those from the proof of Proposition~\ref{ybp}.

Without concentrating the specific properties of $\alg$, we can
investigate the general form of appropriate Lax operators from
$\alg^\la$, i.e., the operators $L\in\alg^\la$ that generate
self-consistent evolution equations on $\alg^\la$ from the Lax
equations \eqref{lh}. This means that the maximal and minimal orders
in $\la$ of right- and left-hand sides of \eqref{lh} have to
coincide. Consider a bounded Lax operator $L\in\alg^\la$ of the form
\begin{equation}\label{bl}
  L = u_N\la^N + u_{N-1}\la^{N-1} + \ldots + u_{1-m}\la^{1-m} + u_{-m}\la^{-m}\qquad N,m\in\Z,
\end{equation}
where $N\me -m$ and $u_i\in\alg$. Then a straightforward analysis
shows that \eqref{bl} yields consistent equations \eqref{lh} if
$u_N$ is a nonzero time-independent element of $\alg$, i.e.,
$(u_N)_{t_n}=0$. The specific properties of $\alg$ might lead to
further restrictions on \eqref{bl}.

Theorem \ref{rsts} can be applied for the construction of
Hamiltonian hierarchies on the dual algebra to $\alg^\la$. We
understand the dual algebra as $\alg^{*\la} =
\alg^*\bc{\la,\la^{-1}}$. However, in contrast with the previously
considered algebras, in the case of a loop algebra we have an
infinite family of $R$-matrices \eqref{rn} with respective
Lie-Poisson brackets \eqref{rliepo} and related Poisson tensors of
the form $\pi_{R_l} dH = \ad^*_{R_l dH}\eta + R_l^*\ad^*_{dH}\eta$, where $\eta\in\alg^{*\la}$
and $dH\in\alg^\la$.

The coadjoint action $\ad^*$ is defined with respect to the Lie
bracket \eqref{liel} on $\alg^\la$. All the above Lie-Poisson
brackets are mutually compatible, which follows from the fact that
the sum of intertwining operators also is an intertwining operator.
Besides, if $C_n\in\smf\bra{\alg^{*\la}}$ is a Casimir function of
the natural Lie-Poisson bracket \eqref{nliepo}, i.e. $\ad^*_{dC_n}\eta = 0$,
then $\la^l C_n$ is also a Casimir function. Hence, in the loop
algebra case the Casimir functions generate multi-Hamiltonian
hierarchies of mutually commuting vector fields on $\alg^{*\la}$
\begin{equation*}
  \eta_{t_n} = \ad^*_{R dC_n}\eta = \ldots = \pi_{R_{-1}}dC_{n+1} = \pi_{R} dC_{n}
  = \pi_{R_{1}}dC_{n-1} = \ldots,
\end{equation*}
where $dC_{n+l} = \la^l dC_{n}$ for $l\in\Z$.

If we have a trace form on $\alg$, given by a linear map
$\tr:\alg\arrow\Km$ such that the form in question is non-degenerate
and symmetric, then this form can be extended to the loop algebra
$\alg^\la$ by the formula
\begin{equation}\label{ltr}
    \Tr (a) = \tr (\res\ a)\qquad a\in\alg^\la,
\end{equation}
where $\res \sum_i a_i\la^i = a_{-1}$. In fact, the choice of
residue in \eqref{ltr} is a matter of convention, and one can choose
the coefficient of an arbitrary order to get a proper definition of
a trace. This is in contrast with the trace form \eqref{dtr} in the
case of Poisson algebras. Nondegeneracy and symmetry of
\eqref{ltr} are preserved. Moreover, if $\tr$ defines an
$\ad$-invariant metric on $\alg$, then this is also true for
\eqref{ltr}, and one can make an identification $\ad^*\equiv \ad$.

Consider now the central extension approach for the loop algebras.
Assume for simplicity that on $\alg$, and therefore on $\alg^\la$,
we have a nondegenerate inner product, and hence
$\alg^{*\la}\cong\alg^\la$. Let $\calg^\la$ be an extension of
$\alg^\la$ with the Lie bracket \eqref{celb} defined by a
two-cocycle $\omega$. The extended natural Lie-Poisson bracket has
the form
\begin{equation}\label{enlp}
   \pobr{H,F}(L) := \bra{L,\brac{dF,dH}}_{\alg^\la} + \alpha\ \omega\bra{dF,dH}.
\end{equation}
In contrast with the previous case, we have here an infinite family
of $R$-matrices \eqref{rn} inducing an associated infinite family of
new Lie-Poisson brackets on $\smf\bra{\alg^\la}$,
\begin{align}\label{pbrn}
      \pobr{H,F}_n(L) :=  \bra{L,\brac{dF,dH}_{R_n}}_{\alg} + \alpha\ \omega_{R_n}\bra{dF,dH},
\end{align}
where
\begin{equation}\label{cocr}
    \omega_{R_n}\bra{a,b}:= \omega\bra{R_na,b} + \omega\bra{a,R_nb}\qquad a,b\in\alg^\la.
\end{equation}
All quantities \eqref{cocr} are two-cocycles of the respective
Lie brackets \eqref{rbr}. For $n=0$ according to Proposition \ref{scc}
this follows from the fact that $R$ satisfies the classical
Yang-Baxter equation, in the remaining cases one must additionally
use the fact that $\la^n$ are intertwining operators. It is
important to stress that all Poisson brackets \eqref{pbrn} are
pairwise compatible, which follows from the fact that the linear sum
of intertwining operators is an intertwining operator. If the two-cocycle $\omega$ is given in the
form \eqref{2c}, then the Casimirs of the extended natural
Lie-Poisson bracket \eqref{enlp} satisfy
\begin{equation}\label{dc}
    \pi dC_n = \ad^*_{dC_n}L + \alpha\, \phi(dC_n) = 0,
\end{equation}
and generate multi-Hamiltonian Lax hierarchy
\begin{equation}\label{llh}
 L_{t_n} = \ad^*_{R dC_n}L + \alpha\, \phi(RdC_n) = \ldots = \pi_{R_{-1}}dC_{n+1} = \pi_{R} dC_{n}
  = \pi_{R_{1}}dC_{n-1} = \ldots,
\end{equation}
where $L\in\alg^\la$, with the restriction that $dC_{n+l} = \la^n
dC_{n}$ for $l\in\Z$. The respective Poisson tensors associated with
the brackets \eqref{pbrn} are given by
\begin{equation}\label{pts}
    \pi_{R_l} dH = \ad^*_{R_ldH}L + R_l^*\ad^*_{dH}L + \alpha\, \phi(R_ldH) + \alpha\, R_l^*\phi(dH).
\end{equation}
We readily see that $R_l^\star =-\la^lR$. To find $dC_n$, we can
assume that
\begin{equation*}
    dC_n\equiv \la^n dC_0 = a_0 + a_{1}\la^{1} + a_{2}\la^{-2} + \ldots\qquad n\in\Z,
\end{equation*}
and solve \eqref{dc} recursively for the coefficients $a_i$. Then
the functions $C_n$ can be reconstructed using the homotopy formula
\eqref{hom}.

Consider now a particular case of the Maurer-Cartan two-cocycle
\eqref{2c}, with $\phi = \pr_x$ and the loop algebra $\ca^\la$ over
\eqref{ca}. Assuming that $\ad^\star\equiv\ad$, the
Lax hierarchy \eqref{llh} takes the form
\begin{equation}\label{lh1}
  L_{t_n} = \brac{\bra{dC_n}_+,L} + \alpha\,\pr_x(dC_n)_+ = \ldots = \pi_{R_l}dC_{n-l} = \ldots\ .
\end{equation}
Then, analyzing \eqref{lh1} we find that $L\in\ca^\la$ of the form
\eqref{bl} yields self-consistent equations for $\alpha\neq 0$ if
$N\me -1$ and $u_N$ is nonzero and time-independent (except for
$N=-1$) and $m\me 0$. The Hamiltonian structures for \eqref{lh1} are
given by the Poisson tensors \eqref{pts} taking the form
\begin{equation}\label{pt1}
   \pi_{R_l} dH = \brac{\bra{\la^ldH}_+,L} - \la^l\brac{dH,L}_+ + \alpha\, \pr_x(\la^ldH)_+ - \alpha\,\la^l\pr_x(dH)_+.
\end{equation}
The Poisson tensors \eqref{pt1} form a proper subspace of $\ca^\la$ with respect
to \eqref{bl}, with the above restrictions, if $N\me l\me -m$ for
$N\me 0$ and if $0\me l\me -m$ for $N=-1$. Thus, there always exist at
least two Poisson tensors $\pi_{R_l}$ for which the procedure of
Dirac reduction is not required. Recall that this analysis
disregards specific properties of the Lie algebra $\alg$. Thus, if
$L$ is further constrained according to these specific properties
the Dirac reduction might be yet required.\looseness=-1

Finally, let us note that in contrast with the previously considered
algebras, where the central extension has lead to (2+1)-dimensional
systems, in the case of loop algebras the central extension is
necessary for the construction of (1+1)-dimensional integrable
continuous field systems. The reader will find the examples of this
construction in the next sections.

A natural choice are loop algebras defined over finite-dimensional
semi-simple Lie algebras. In such a case the Killing form gives us
symmetric, nondegenerate and also $\ad$-invariant inner product. Thus, taking into consideration the central extension
procedure with the Maurer-Cartan two-cocycle, the above choice leads to the constructions of
a wide class of (1+1)-dimensional integrable continuous systems. The simplest case is
presented in Section \ref{sl2}.

\section{Integrable dispersionless systems}

The theory of integrable dispersionless or equivalently integrable
hydrodynamic-type systems, i.e., the  quasi-linear systems of
first-order partial differential equations, belongs to the most
recent ones and has been systematically developed from the 1980's.
Significant progress was achieved after S.~Tsarev \cite{Tsarev}
discovered a technique called the generalized hodograph method that
permits to find solutions using quadratures, see also \cite{fer1}.

The study of the Poisson structures of dispersionless systems was
initiated by B.~Dubrovin and S.~Novikov \cite{Dubrovin1}. They
established a remarkable result that the Poisson tensors of hydrodynamic
type can be generated by contravariant nondegenerate flat Riemannian
metrics. The natural geometric setting for
the associated bi-Hamiltonian structures (Poisson pencils) is the
theory of Frobenius manifolds based on the geometry of pencils of
contravariant Riemannian metrics \cite{dubr}. The Frobenius
manifolds were introduced by B.~Dubrovin as a
coordinate-free form of the associativity equations, appearing in the
context of deformations of $2$-dimensional topological field theories
(TFT), the so-called WDVV equations, that can be identified with
(a class of) hydrodynamic-type systems.

\subsection{Poisson algebras of Laurent series}

Consider the algebra of 'formal' Laurent series in $p\in\Cm^*$ about
$\infty$,
\begin{align*}
\Alg = \Alg_{\me k-r} \oplus \Alg_{<k-r} := \pobr{\sum_{i= k-r}^N
a_i(x)p^i}\oplus \pobr{\sum_{i< k-r} a_i(x)p^i},
\end{align*}
where $u_i$ are smooth functions of continuous variable
$x\in\Omega$, i.e., $\Alg$ consists of polynomial functions in $p$
and $p^{-1}$ with finite highest orders, where $\Omega = \Si$ if we
assume these functions to be periodic in $x$ or $\Omega = \Rm$ if
these functions belong to the Schwartz space ($u_i$ and all their
derivatives tend rapidly to zero when $x$ approaches $\pm\infty$).
We can introduce the Lie algebra structure on $\Alg$ in infinitely
many ways using a family of Poisson brackets
\begin{equation}\label{pb}
\pobr{f, g}_r := p^r
\bra{\pd{f}{p}\pd{g}{x}-\pd{f}{x}\pd{g}{p}}\qquad r\in \bb{Z}\quad
f,g\in \Alg.
\end{equation}
that generalize the well-known canonical Poisson bracket (the case
$r=0$).

The trace form in the algebra $\Alg$ with fixed Poisson bracket
\eqref{pb} for some $r$ is defined in the following fashion:
\begin{align}\label{dtr}
\Tr f := -\int_\Omega \res_\infty \bra{p^{-r}f} dx\qquad f\in\Alg.
\end{align}
Here $\res$ is the standard residue at $p=\infty$ such that
$\res_\infty L = -u_{-1}$ for $L=\sum_i u_ip^i$.

\begin{proposition}
The scalar product \eqref{sc} defined by means of the trace form
\eqref{dtr} is symmetric, nondegenerate and $\ad$-invariant.
\end{proposition}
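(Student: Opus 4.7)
The plan is to verify the three properties in turn, with the bulk of the work going into $\ad$-invariance via Lemma \ref{adtr}.

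Symmetry is immediate: since $\Alg$ is a commutative algebra under ordinary multiplication, $fg = gf$, so $(f,g)_\Alg = \Tr(fg) = \Tr(gf) = (g,f)_\Alg$. For nondegeneracy, I would argue by extraction of coefficients. Given $f = \sum_i f_i(x) p^i \in \Alg$ with $(f,g)_\Alg = 0$ for all $g\in\Alg$, I would plug in test elements of the form $g = h(x)\, p^{r-i-1}$ for an arbitrary smooth $h$ (compactly supported, or of Schwartz class, depending on $\Omega$). Then $\res_\infty(p^{-r} f g) = \res_\infty\bra{h(x) f\, p^{-i-1}} = h(x)f_i(x)$, so the pairing reduces to $-\int_\Omega h(x) f_i(x)\,dx$. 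Vanishing for all $h$ forces $f_i \equiv 0$, and letting $i$ run over all integers in the support of $f$ gives $f=0$.

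For $\ad$-invariance I would invoke Lemma \ref{adtr}: the Poisson bracket \eqref{pb} is, by construction, a derivation of the commutative multiplication in each argument, which supplies the Leibniz condition \eqref{derm}. It then remains to check $\Tr\{a,b\}_r = 0$. The key identity is
\begin{equation*}
    \pd{a}{p}\pd{b}{x} - \pd{a}{x}\pd{b}{p} = \pd{}{p}\!\left(a\pd{b}{x}\right) - \pd{}{x}\!\left(a\pd{b}{p}\right),
\end{equation*}
which is a direct check using the ordinary Leibniz rule (the mixed second derivatives of $b$ cancel). Multiplying by $p^r$ and inserting into the definition of the trace,
\begin{equation*}
    \Tr\{a,b\}_r = -\int_\Omega \res_\infty\!\left[\pd{}{p}\!\left(a\pd{b}{x}\right)\right]dx + \int_\Omega \res_\infty\!\left[\pd{}{x}\!\left(a\pd{b}{p}\right)\right]dx.
\end{equation*}
The first integrand vanishes pointwise in $x$ because the residue of any $\partial_p$-derivative of a Laurent series is zero (the coefficient of $p^{-1}$ in $\sum i F_i p^{i-1}$ is $0\cdot F_0 = 0$). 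The second integrand equals $\partial_x \res_\infty(a\,\partial_p b)$, whose integral over $\Omega$ vanishes by periodicity (when $\Omega=\Si$) or by rapid decay in the Schwartz class (when $\Omega=\Rm$). Hence $\Tr\{a,b\}_r=0$, and Lemma \ref{adtr} delivers $\ad$-invariance.

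The only step that is not essentially formal is the trace-of-bracket identity, and within that step the main obstacle is producing the rewriting as a clean sum of a $p$-total derivative and an $x$-total derivative; once that form is in hand, the vanishing of both pieces is forced by the definition of $\res_\infty$ and the boundary conditions on $x$. Everything else (symmetry, nondegeneracy, the derivation property of $\{\cdot,\cdot\}_r$) is either tautological or follows from standard manipulations with Laurent coefficients.
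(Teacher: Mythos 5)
Your proof is correct and takes essentially the same route as the paper: both verify $\ad$-invariance by invoking Lemma \ref{adtr} and checking $\Tr\pobr{a,b}_r=0$ through integration by parts in $p$ and $x$ (the paper casts the $p$-integration as a contour integral $\oint_\gamma$, while you use the equivalent algebraic fact that the residue of a total $\pr_p$-derivative of a Laurent series vanishes). You additionally write out the coefficient-extraction argument for nondegeneracy, which the paper simply declares obvious.
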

\begin{proof}
The nondegeneracy and symmetry are obvious. Let $\gamma$ be a closed
curve encircling once an infinity point on the extended complex
plane. Then
\begin{align*}
 \Tr \pobr{f,g}_r &= -\int_\Omega \res_\infty \bra{\pr_p f \pr_x  g}
dx + \int_\Omega \res_\infty \bra{\pr_x  f \pr_p g} dx\\
 &= -\frac{1}{2\pi i}\int_\Omega \oint_{\gamma_\la} \bra{\pr_p \pr_x
g} dpdx + \frac{1}{2\pi i}\int_\Omega \oint_{\gamma_\la} \bra{\pr_x
f \pr_p g} dpdx = 0,
\end{align*}
where the latter equality follows from integrations by parts with
respect to $p$ and $x$. Hence, the $\ad$-invariance follows by Lemma
\eqref{adtr}.
\end{proof}

Obviously, \eqref{pb} is a derivation for the multiplication in
$\Alg$ and hence we can further apply the scheme from Section 2.5.
Fixing $r$ we fix Poisson algebra and we are able to construct
$R$-matrices following from the decomposition of $\Alg$. Simple
inspection shows that $\Alg_{\me k-r}$ and $\Alg_{< k-r}$ are Lie
subalgebras of $\Alg$ only in the following cases:
\begin{itemize}
\item if $r=0$ for $k=0$;
\item if $r\in\Z$ for $k=1,2$;
\item if $r=2$ for $k=3$.
\end{itemize}
Thus, fixing $r$ we fix the Lie algebra structure with $k$ numbering
the $R$-matrices given in the form $R=P_{\me k-r} - \frac{1}{2}$.
Hence, we have multi-Hamiltonian Lax hierarchies \eqref{eveq} of the
form
\begin{equation}\label{dlaxh}
    L_{t_n} = \pobr{\bra{L^{\frac{n}{N}}}_{\me k-r},L}_r = \pi_0 dH_n = \pi_1 dH_{n-1}\qquad n=1,2,\ldots,
\end{equation}
generated by fractional powers of infinite-field Lax functions
$L\in\Alg$ given in the form
\begin{equation}\label{pol}
L =  u_N p^N + u_{N-1}p^{N-1} + u_{N-2} p^{N-2} + \ldots\qquad N\neq
0,
\end{equation}
where $u_{N}=1, u_{N-1}=0$ for $k=0$ and $u_{N}=1$ for $k=1$. A
simple analysis of \eqref{dlaxh} shows that \eqref{pol} are
appropriate Lax functions except for the case $k=3$, which is
excluded from further  considerations. This means that all the Lax
functions under study have pole or zero at infinity.

We find that $R^* = \frac{1}{2} - P_{\me 2r-k}$. The differentials
of a given functional $H\in\smf\bra{\Alg}$ of \eqref{pol} have the
form $dH = \sum_{i=-\infty}^{N+k-2}\var{H}{u_i}p^{r-1-i}$. Hence,
the related Poisson tensors for \eqref{dlaxh} are given by
\eqref{pot}
\begin{equation}\label{poi}
    \pi_q dH  = \pobr{\bra{L^q dH}_{\me k-r},L}_r - L^q\bra{\pobr{dH,L}_r}_{\me 2r-k},
\end{equation}
where $q=0,1,\ldots$, with the following Hamiltonians \eqref{casq}
\begin{equation}\label{dham}
    H_n = -\frac{N}{n+N} \Tr \bra{L^{\frac{n}{N}+1}}\qquad n\neq -N.
\end{equation}

We still have to check whether the above Lax functions span proper
subspaces with respect to the above Poisson operators \eqref{poi},
of the full Poisson algebras. We will restrict our considerations to
linear ($n=0$) and quadratic ($n=1$) Poisson tensors, as they are
obvious enough to define bi-Hamiltonian structures. Besides, in all
nontrivial cases, the Lax functions do not span proper subspaces
w.r.t. the Poisson tensors for $n\me 2$.

For $k=0$ the above Lax functions always span the proper subspace
w.r.t. the linear Poisson tensor, but for $k=1,2$ this is the case
only if $N\me 2r-2k+1$, otherwise the Dirac reduction is required.
For the quadratic Poisson tensors the Dirac reduction is always
necessary. The reduced quadratic Poisson tensor for $k=r=0,1,2$ is
given by
\begin{equation}\label{quad_I}
 \pi_1^{red}dH  = \pobr{\bra{L dH}_{\me 0},L}_r - L
\bra{\pobr{dH,L}_r}_{\me r} +
\frac{1}{N}\pobr{L,\pr_x^{-1}\res_\infty \pobr{dH,L}_0}_r,
\end{equation}
see Section \ref{dir} and Lemma \eqref{dirac}. The reduced Poisson
tensors \eqref{quad_I} are always local as the residue from the last
term is always a total derivative.

The dispersionless systems with the Lax representations of the form
\eqref{dlaxh} where $r=0$ with $k=0,1$ (the canonical Poisson
bracket) related to the dispersionless KP hierarchy and the dispersionless
modified KP hierarchy and the case $k=r=1$ of the dispersionless Toda
hierarchy, together with their finite-field reductions, were considered
in many papers (see for example
\cite{leb1,zakh,kod1,kri,Takasaki2,Strachan1,Aoyama2,Fairlie,bru}
and more recent \cite{ch,CDZ,Pa1}). The theory with the Poisson bracket
\eqref{pb} for arbitrary integer $r$, from the point of view of
classical $R$-matrices, was considered for the first time in
\cite{Blaszak5} and further developed in
\cite{Blaszak6,Szablikowski}.

The decomposition of the algebra $\Alg$ into Lie subalgebras is
preserved under the transformation $p\map p^{-1}$ (the case $k,r$
goes to $k'=3-k,r'=2-r$), but the Laurent series at $\infty$
\eqref{pol} transform into Laurent series at zero. This fact
suggests more analytic approach to the construction of
dispersionless systems, see \cite{Krichever2,GMMA}. For the theory
of meromorphic Lax representations \eqref{dlaxh} of dispersionless
systems see \cite{Szablikowski}. In \cite{Krichever2} Krichever
introduced the so-called universal Whitham hierarchies by means of
the moduli spaces of Riemann surfaces of all genera. Other classes
of reductions yielding nonstandard integrable dispersionless systems
can be found for instance in \cite{Gibbons,Yu,Pa3}. For a more
detailed discussion of the Lax representations \eqref{dlaxh} for
noncanonical Poisson brackets, their reductions and Hamiltonian
structures as well as several examples of finite-dimensional
reductions see \cite{Blaszak6} and \cite{Szablikowski}.

We know that the quantization of Poisson algebras for $r=0$ (and for
the equivalent case $r=2$) gives the algebra of pseudo-differential
operators and leads to the construction of field soliton systems,
while the quantization of the case for $r=1$ gives the algebra of
shift operators and leads to the lattice soliton systems. However,
the class of reductions yielding construction of dispersionless
systems is much wider then the class of corresponding reductions
yielding systems with dispersion. Besides, the issue of quantization
of Poisson algebras for $r\neq 0,1,2$ that would lead to the construction of
dispersive integrable systems is still open, see
\cite{Strachan2,Blaszak8,Das,szabla} and references therein.

Below we present two examples of hydrodynamic chains
(infinite-field systems) with their bi-Hamiltonian structures. For
the classifications of hydrodynamical chains and related
hydrodynamic Poisson tensors see \cite{Pa0,Pa1,Pa2}.

\begin{example}
Let us consider the infinite-field Lax function \eqref{pol} for $N=1$:
 $L= p + \sum_{i\me 0} u_i p^{-i-1}$. Then we find the first nontrivial
hydrodynamic chain from hierarchy \eqref{dlaxh}, the well-known
Benney moment chain,
\begin{align*}
  L_{t_2} = \frac{1}{2}\pobr{\bra{L^2}_{\me 0},L}_0 = \pi_0 dH_2 = \pi_1 dH_1\iff
(u_i)_{t_2} = (u_{i+1})_x+iu_{i-1}(u_0)_x,
\end{align*}
where $(L^2)_{\me 0} = p^2 + 2u_0$. The bi-Hamiltonian structure is
given by the Poisson tensors
\begin{align*}
\pi_0^{ij} &= j\pr_xu_{i+j-1}+iu_{i+j-1}\pr_x\\
\pi_1^{ij} &=(i+1)u_{i+j}\pr_x+(j+1)\pr_x u_{i+j}+(i+1)ju_{i-1}\pr_x u_{j-1}\\
&\qquad +\sum_{k=0}^{j-1}\brac{(i-j+k)u_{i-1+k}\pr_xu_{j-1-k}+k\, u_{j-1-k}\pr_x
u_{i-1+k}},
\end{align*}
which are obtained from \eqref{poi} for $q=0$ and \eqref{quad_I}.
The respective Hamiltonians are \eqref{dham}
\begin{equation*}
    H_1 = \frac{1}{2}\int_{\Omega} u_1\,dx\qquad H_2 = \frac{1}{2}\int_{\Omega}\bra{u_0^2+u_2} dx.
\end{equation*}
\end{example}

\begin{example}
The case of $k=r=1$ with the Lax function \eqref{pol} for $N=1$ of the
form $L = p + \sum_{i=0}^\infty u_{i}p^{-i}$. The first
hydrodynamic chain from \eqref{dlaxh} has the following form:
\begin{equation*}
L_{t_1} = \pobr{(L)_{\me 0},L}_1=\pi_0 dH_1 = \pi_1 dH_0\iff
\label{e1} (u_i)_{t_1} = (u_{i+1})_x+iu_i(u_0)_x.
\end{equation*}
The bi-Hamiltonian structure is given by the Poisson tensors
\begin{align*}
\pi_0^{ij} &= j\pr_xu_{i+j}+iu_{i+j}\pr_x\\
\pi_1^{ij} &=
\sum_{k=0}^i\brac{(j-k)u_k\pr_xu_{i+j-k}+(i-k)u_{i+j-k}\pr_xu_k}
+ i(j+1)u_i\pr_xu_j\\
&\qquad + (j+1)\pr_xu_{i+j+1}+(i+1)u_{i+j+1}\pr_x
\end{align*}
as well as the Hamiltonians $H_0 = \int_{\Omega} u_0\,dx$
and $H_1 = \int_{\Omega} \bra{u_1+\frac{1}{2}u_0^2} dx$.
\end{example}

The central extension approach from Section \ref{cea} with the
Mauren-Cartan two-cocycle yields the following Lax hierarchy
\eqref{ceh} of (2+1)-dimensional hydrodynamic systems
\begin{equation}\label{dlaxh2}
    L_{t_n} = \pobr{\bra{dC_n}_{\me k-r},L-\alpha\,q}_r = \pi_0 dH_n\qquad n=1,2,\ldots,
\end{equation}
where the Poisson bracket takes the form
\begin{equation*}
 \pobr{f, g}_r := p^r \bra{\pd{f}{p}\pd{g}{x}-\pd{f}{x}\pd{g}{x}} +
\bra{\pd{f}{q}\pd{g}{y}-\pd{f}{y}\pd{g}{q}}\quad r\in \bb{Z}\quad
f,g\in \widetilde{\Alg}.
\end{equation*}
The infinite-field Lax operators \eqref{pol}, with $u_i$ depending additionally on $y\in\Si$, are admissible with respect to \eqref{dlaxh2} if $N\me 1-r$. To construct evolution
equations from \eqref{dlaxh2}, we take $dC_n = \sum_{i=0}^\infty
a_{n-i}p^{n-i}$ and solve the Novikov-Lax equation \eqref{novlax},
$\pobr{dC_n, L-\alpha\,q}_r = 0$ for the auxiliary fields $a_i$ in
terms of the fields from Lax function. The linear Poisson tensor
\eqref{ept} is given by
\begin{equation*}
    \pi_0 dH  = \pobr{\bra{dH}_{\me k-r},L-\alpha\,q}_r - \bra{\pobr{dH,L-\alpha\,q}_r}_{\me 2r-k}.
\end{equation*}
The Lax functions \eqref{pol} span the proper subspace w.r.t. the
above linear Poisson tensor if $N\me 2r-2k+1$, otherwise the Dirac
reduction is required. The construction of (2+1)-dimensional
dispersionless systems by means of central extension procedure
yielding the Lax hierarchy was presented in \cite{Blaszak7}, where
one can also find a number of examples.

\begin{example}
The case of $k=r=0$. Consider $L =p^2 + u$. Then, for $(dC_3)_{\me 0} = p^3 + \frac{3}{2}up + \frac{3}{4}\alpha \pr_x^{-1}u_y$ we obtain the (2+1)-dimensional dKP equation
\begin{equation*}
 L_{t_3} = \pobr{(dC_3)_{\me 0},L-\alpha\,q}_0=\pi_0 dH_3 = \pi_1
dH_1\iff u_{t_3} = \frac{3}{2}uu_x +
\frac{3}{4}\alpha^2\pr_x^{-1}u_y,
\end{equation*}
where the Poisson tensors are $\pi_0 = 2\pr_x$ and if $\alpha=0$
$\pi_1 = \pr_xu+u\pr_x$. The Hamiltonians are
\begin{equation*}
H_1 = \iint_{\Omega \times \Si}\frac{1}{4}u^2\ dxdy\qquad H_3 =
\iint_{\Omega \times
\Si}\frac{1}{16}\bra{2u^3+3\alpha^2u\pr_x^{-2}u_{yy}}dxdy.
\end{equation*}
\end{example}

\begin{example}
Consider the Lax operator $L = p^{2-r} + up^{1-r} + vp^{-r}$ for
$k=1$ and $r\in\Z$, $r\neq 2$. Then for $(dC_{2-r})_{\me -r+1} =
p^{2-r} + up^{1-r}$ we have
\begin{align*}
L_{t_{2-r}} &= \pobr{(dC_{2-r})_{\me 1-r},L-\alpha\,q}_r=\pi_0 dH_{2-r} = \pi_1 dH_{1-r}\iff\\
&\qquad\pmatrx{cc}{u\\ v}_{t_{2-r}} = \pmatrx{cc}{(2-r)v_x+\alpha u_y\\
ru_xv+(1-r)uv_x}.
\end{align*}
For $r=1$ we get the (2+1)-dimensional dispersionless Toda equation
with Hamiltonian structure given by
\begin{equation*}
\pi_0 = \pmatrx{cc}{\alpha\pr_y & \pr_xv\\
v\pr_x & 0 }\quad{\rm and,\,if\,\alpha=0},\quad\pi_1 = \pmatrx{cc}{\pr_xv + v\pr_x & u\pr_x v\\
v\pr_x u & 2v\pr_x v},
\end{equation*}
with the Hamiltonians
\begin{equation*}
H_1 = \iint_{\Omega\times\Si}\bra{v+\frac{1}{2}u^2}dxdy\qquad
H_0=\iint_{\Omega\times\Si}u\,dxdy.
\end{equation*}
\end{example}

\subsection{Universal hierarchy}

Let $\alg = \Vect(\Si)$ be the Lie algebra of (smooth) vector fields
on the circle $\Si$ over the field $\Km$. The elements of
$\Vect(\Si)$ can be identified with smooth functions $a(x)$ of
spatial variable $x\in\Si$, with a Lie bracket in $\alg$ of the form
\begin{equation}\label{comvir}
 \dual{a,b} := ab_x-b a_x,
\end{equation}
where $a,b\in\alg$. Notice that \eqref{comvir} is a well-defined Lie
bracket that does not satisfy the Lebniz rule \eqref{derm}. Thus, we
follow the scheme from Section \ref{la} and consider the loop
algebra over $\Vect(\Si)$, i.e., $\alg^\la =
\Vect(\Si)\bc{\la,\la^{-1}}$. The commutator \eqref{comvir} readily
extends to $\alg^\la$. We already know that we have the decomposition
of $\alg^\la$ into Lie subalgebras \eqref{ls}, with respective
classical $R$-matrices generating the following Lax hierarchies
\eqref{lh}:
\begin{equation}\label{uh}
    L_{t_n} = \dual{\bra{\la^n L}_{\me k}, L}\qquad k=0,1,
\end{equation}
where $L\in\alg^\la$ and $n\in\Z$.

The hierarchy \eqref{uh} is the so-called universal hierarchy of
hydrodynamic type, which is a subject of intensive research in
recent years \cite{as0,as1}. The hierarchy \eqref{uh} can be
obtained as a quasi-classical limit of the coupled KdV equations of
Antonowicz and Fordy \cite{ant}. In this fashion we can obtain
multi-Hamiltonian structure for the universal hierarchy \cite{fer2}.
In fact, the multi-Hamiltonian structure of the coupled KdV
equations using classical $R$-matrix approach can be algebraically
interpreted as a set of compatible Lie-Poisson structures on the
dual space to the loop Virasoro algebra \cite{for}. The Virasoro
algebra is a central extension of the Lie algebra of vector fields
on a circle $\Vect(\Si)$ associated to the Gelfand-Fuchs two-cocycle
(i.e., \eqref{2c} with $\phi = \pr_x^3$).

\begin{example}
Consider \eqref{uh} for the infinite-field Lax functions given in
the following (appropriate) form $L = u_0 + u_1\la^{-1} + u_2\la^{-2} + \ldots$,
where $u_0=1$ for $k=0$. One can observe that $L_{t_n} =
\dual{\bra{\la^n L}_{\me k}, \bra{\la^nL}_{<k}}\la^{-n}$. Hence, the
evolution equations from \eqref{uh} take the form of
(1+1)-dimensional hydrodynamic chains
\begin{align*}
    (u_i)_{t_n} = (1-k)(u_i)_x + \sum_{j=1-k}^{i-1+k}\dual{u_{i-j},u_{n+j}}\qquad k=0,1.
\end{align*}
\end{example}

Assume that the dynamical fields in $\alg^\la$ depends on an
additional spatial variable~$y$. Let us now consider a
(2+1)-dimensional counterpart of \eqref{uh}. It reads
\begin{equation}\label{uh2}
    L_{t_n} = \dual{\bra{A_n}_{\me k}, L} + \pr_y\bra{A_n}_{\me k}\qquad k=0,1,
\end{equation}
where $A_n = a_n\la^n + a_{n-1}\la^{n-1} + \ldots$ satisfy
\begin{equation}\label{ln}
    \dual{A_n,L} + \bra{A_n}_y = 0.
\end{equation}
Notice that $A_n = \la^n A_0$. For a given $L\in\alg^\la$ one finds
coefficients $a_i$ from $A_n$ by solving \eqref{ln} recursively.
Notice that one cannot obtain \eqref{uh2} as a central extension of
the universal hierarchy, as \eqref{mc} is not a two-cocycle
associated with $\Vect(\Si)$. Commutativity of the equations from
the hierarchy \eqref{uh2} can be proved by straightforward
computation. On the other hand, integrability of the equations from
\eqref{uh2} follows from the fact that \eqref{uh2} is a Lax
sub-hierarchy of the centrally extended cotangent universal
hierarchy \cite{serg2}.

\begin{example}
Let $L$ have the form $L = \la + u$ in the case of $k=0$. Then
solving \eqref{ln} one finds that $A_0 = 1 + u\la^{-1} + \pr_x^{-1}u_y\la^{-2} + \ldots$.
Hence, the first nontrivial (2+1)-dimensional hydrodynamic equation from the
hierarchy \eqref{uh2} is \cite{as1}
\begin{equation*}
    L_{t_2} = \dual{\bra{A_2}_{\me 0},L} + \pr_y\bra{A_n}_{\me 0}\iff u_{t_2} = \pr_x^{-1}u_{yy} - uu_y + u_x\pr_x^{-1}u_y.
\end{equation*}
This hydrodynamic equation is equivalent to system of the form
\begin{equation*}
    u_{t_2} - v_y + uv_x - u_xv =0\qquad v_x-u_y =0,
\end{equation*}
that has recently attracted considerable attention, see \cite{Pa0,fer3,dun,man,or,serg2}.
\end{example}

\begin{example}
The case of $k=1$ for $L = u\la^{-1}$. We have $A_0 = 1 -
\pr_y^{-1}u_x\la^{-1} + \ldots$. Hence
\begin{equation*}
    L_{t_2} = \dual{\bra{A_2}_{\me 1},L} + \pr_y\bra{A_n}_{\me 1}\iff u_{t_2} = u\pr_y^{-1}u_{xx} - u_x\pr_y^{-1}u_x.
\end{equation*}
\end{example}

\section{Integrable dispersive systems}

In the following section we apply the general formalism to several infinite dimensional Lie algebras in order to construct a vast family of integrable dispersive (continuous and discrete soliton) systems.

Recently, the so-called integrable $q$-analogues of KP and Toda
like hierarchies become of increasing interest, see \cite{kass,fre,klr,ahm,tak}.
Our approach presented in two following subsections includes the $q$-systems as a special case.
Actually, we consider generalized algebras of shift and pseudo-differential operators
that allows to construct in one scheme not only ordinary lattice and field systems, but
in particular also their $q$-deformations.

We also present $(2+1)$-dimensional extensions of the Lax hierarchies following
from the algebras of shift and pseudo-differential operators.
In these cases the the quadratic Poisson tensor is not preserved.
Nevertheless, the second Poisson tensor can be given by means of the
so-called operand formalism \cite{san,fok,magn}. Its construction, within classical $R$-matrix
approach, can be found in \cite{Blaszak3}.

In last subsection we present the application of classical $R$-matrix formalism
to the loop algebra  $sl(2,\Cm)\bc{\la,\la^{-1}}$, which is the simplest
case of infinite-dimensional Lie algebras of Kac-Moody. In fact the scheme
of construction of infinite-dimensional systems from affine Kac-Moody Lie algebras is one
of the most general and particular cases are closely related
to the algebras of shift and pseudo-differential operators yielding soliton systems.
For details we send the reader to the survey \cite{drinfeld}.

For the application of $R$-matrix formalism to other algebras  see for example \cite{Pr1,Pr2,Pr3,skryp2}. A specially interesting class of algebras is related with the so called super-symmetric (SUSY) systems. The details the reader finds in \cite{O,P1,bru1,Pr2,He} and in literature quoted there.

\subsection{Algebra of shift operators}

Consider the algebra of `formal' shift operators
\begin{equation}\label{aso}
  \alg = \alg_{\me k-1} \oplus \alg_{< k-1} = \pobr{\sum^N_{i\me k-1}
u_i\e^i} \oplus \pobr{\sum_{i< k-1} u_i\e^i},
\end{equation}
where $u_i\in\F$, and $\F$ is an algebra of dynamical fields with
values in $\Km$. The associative multiplication rule in $\alg$
\eqref{aso} is defined by
\begin{equation}\label{smult}
\e^m u = E^m (u)\e^m\qquad m\in \Z.
\end{equation}

\begin{proposition}
The multiplication in \eqref{aso} of the form \eqref{smult} is
associative if and only if $E:\F\arrow\F$ is an invertible
endomorphism, i.e.
\begin{equation*}
    E(u v) = E(u) E(v).
\end{equation*}
\end{proposition}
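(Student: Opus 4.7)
The plan is to reduce associativity to a simple identity on monomials and then recognize it as the ring-endomorphism property of $E$. By bilinearity, it suffices to verify associativity on triples of monomials $u\e^m$, $v\e^n$, $w\e^k$ where $u,v,w\in\F$ and $m,n,k\in\Z$. Using \eqref{smult} twice, the left grouping $\bra{(u\e^m)(v\e^n)}(w\e^k)$ reduces to $u\,E^m(v)\,E^{m+n}(w)\,\e^{m+n+k}$, whereas the right grouping $(u\e^m)\bra{(v\e^n)(w\e^k)}$ reduces to $u\,E^m\bra{v\,E^n(w)}\,\e^{m+n+k}$. Hence associativity is equivalent to
\begin{equation*}
E^m\bra{v\,E^n(w)} = E^m(v)\,E^{m+n}(w)\qquad \forall m,n\in\Z,\ \forall v,w\in\F.
\end{equation*}

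Next I would show that this identity is equivalent to $E$ being a ring endomorphism. For the forward direction ($\Rightarrow$), set $m=1$, $n=0$, which immediately yields $E(vw)=E(v)E(w)$. For the reverse direction ($\Leftarrow$), first check that if $E$ is an endomorphism then so is $E^{-1}$: apply $E^{-1}$ to the identity $E(a)E(b)=E(ab)$ with $a=E^{-1}(x)$, $b=E^{-1}(y)$ to obtain $E^{-1}(x)E^{-1}(y)=E^{-1}(xy)$. A routine induction on $|m|$ then shows that $E^m$ is an endomorphism for every $m\in\Z$, and combining with the composition rule $E^{m+n}=E^m\circ E^n$ recovers the displayed identity above.

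Regarding invertibility: this is in fact a prerequisite for the multiplication rule \eqref{smult} to make sense on the whole algebra \eqref{aso}, since negative-power monomials $u\e^{-1}, u\e^{-2},\ldots$ are present, and the rule $\e^{-m}u=E^{-m}(u)\e^{-m}$ requires $E^{-m}$ to be a well-defined map $\F\to\F$. Equivalently, without invertibility the identity $E^m\bra{v\,E^n(w)}=E^m(v)E^{m+n}(w)$ for negative $n$ is not even well-posed.

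I do not expect a substantial obstacle: once the computation on monomials is written out the statement is essentially bookkeeping. The only mildly delicate point is verifying that $E^{-1}$ inherits the endomorphism property from $E$, which, as noted above, is a one-line substitution argument. The rest is routine verification and appeals to bilinearity.
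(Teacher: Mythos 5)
Your proposal is correct. The paper itself omits the argument (it only remarks that ``the proof is straightforward''), and your verification on monomials, reducing associativity to the identity $E^m\bra{v\,E^n(w)}=E^m(v)\,E^{m+n}(w)$ and then to the endomorphism property of $E$ (together with the observation that $E^{-1}$ inherits this property, so that $E^m$ is an endomorphism for all $m\in\Z$), is precisely the intended straightforward computation; your remark that invertibility of $E$ is forced by the presence of negative powers of $\e$ in \eqref{aso} correctly accounts for that part of the statement.
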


The proof is straightforward. The Lie bracket in $\alg$ is given by
the commutator $\brac{A,B} = AB-BA$, where $A,B\in\alg$.

Let $\tr:\alg\arrow\Km$ be a trace form, being a linear map, such
that
\begin{equation}\label{str}
  \tr(A) :=  \dual{\free (A)},
\end{equation}
where $\free (A):= a_0$ for $A=\sum_i a_i\e^i$ and
$\dual{\cdot}$ denotes a functional whose
form depends on the realization of the algebra $\F$ and the
endomorphism $E$. We assume that $\dual{\cdot}$ is such that
\begin{equation}\label{rest}
    \dual{Ef} = \dual{f}\qquad f\in\F
\end{equation}
holds. Notice that from this assumption it follows that $\dual{uEv}
= \dual{vE^{-1}u}$, thus the adjoint of operator $E$ is $E^\dag =
E^{-1}$.

\begin{proposition}\label{sinner}
The bilinear map defined as
\begin{equation}\label{dms}
\bra{A,B}_\alg := \tr\bra{A B}
\end{equation}
is an inner product on $\alg$ which is nondegenerate, symmetric and
$\ad$-invariant.
\end{proposition}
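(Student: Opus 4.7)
The plan is to verify the three asserted properties—symmetry, nondegeneracy, $\ad$-invariance—in that order, deducing the last from the first through Lemma~\ref{adtr}.

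First I would establish symmetry by computing the free term of a product. Writing $A = \sum_i a_i\e^i$ and $B = \sum_j b_j \e^j$, the multiplication rule \eqref{smult} yields $AB = \sum_{i,j} a_i\, E^i(b_j)\,\e^{i+j}$, so $\free(AB) = \sum_i a_i\, E^i(b_{-i})$. The assumption \eqref{rest} together with invertibility of $E$ gives $\dual{E^k f} = \dual{f}$ for every $k\in\Z$, hence
\begin{equation*}
  \dual{a_i\, E^i(b_{-i})} = \dual{E^{-i}\bra{a_i\, E^i(b_{-i})}} = \dual{E^{-i}(a_i)\, b_{-i}},
\end{equation*}
where I used that $E$ is an endomorphism of $\F$. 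Reindexing $j=-i$ turns $\sum_i \dual{E^{-i}(a_i) b_{-i}}$ into $\sum_j \dual{b_j\, E^j(a_{-j})} = \tr(BA)$ (using commutativity inside the pairing, which is the standard assumption on the field algebra $\F$). Thus $\tr(AB)=\tr(BA)$.

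Next, for nondegeneracy, suppose $\tr(AB)=0$ for every $B\in\alg$. Testing against the monomial $B = f\e^{-i}$ with arbitrary $f\in\F$, only the $j=-i$ term contributes to $\free(AB)$, giving $\dual{a_i\, E^i(f)}=0$ for every $f\in\F$. Because $E$ is an invertible endomorphism, $E^i$ is a bijection of $\F$, so $\dual{a_i\, g}=0$ for every $g\in\F$; the implicit nondegeneracy of the base functional $\dual{\cdot}$ then forces $a_i=0$. This is valid for each $i\in\Z$, so $A=0$.

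Finally, $\ad$-invariance is immediate from the second part of Lemma~\ref{adtr}: the Lie bracket on $\alg$ is the commutator on the associative algebra $\alg$, so it suffices that the trace vanish on commutators, which is exactly the content of the symmetry established above. The only genuinely delicate step is the symmetry computation, where the interplay between the shift-intertwining rule \eqref{smult} and the invariance property \eqref{rest} of $\dual{\cdot}$ must be handled carefully; once this is in place, the remaining items are routine.
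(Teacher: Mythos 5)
Your proof is correct and follows essentially the same route as the paper's (which merely asserts that nondegeneracy is obvious, that symmetry follows from \eqref{rest}, and that $\ad$-invariance follows from associativity via Lemma \ref{adtr}); you have simply filled in the computations the paper omits. Your explicit flagging of the needed nondegeneracy of the base functional $\dual{\cdot}$ on $\F$ is a detail the paper leaves implicit, but it does not change the argument.
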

\begin{proof}
The nondegeneracy of \eqref{dms} is obvious. The symmetricity
follows from the definition by using \eqref{rest}. The
$\ad$-invariance is a consequence of the associativity of
multiplication operation in $\alg$, see Lemma \ref{adtr}.
\end{proof}

Let us mention now a few standard realizations of the algebra of
shift operators~\eqref{aso}:
\begin{itemize}
  \item The first one is given by the shift operators on a {\it discrete lattice}. In this case the dynamical
functions are $u_i:\Z\arrow\Km$ and $E^mu(n) := u(n+m)$. The form of the
functional from \eqref{str} is $\dual{f}:= \sum_{n\in\Z}f(n)$.
  \item The second realization is given by the shift operators
on a {\it continuous lattice}. Now $u_i:\Omega\arrow\Km$ are smooth
functions of $x$, where $\Omega = \Si$ or $\Omega = \Rm$, if $u_i$ are
from the Schwartz space. In this case the shift operator is $E^mu(x) :=
u(x+m\hk)$, where $\hk$ is a parameter. In this case the
functional is just integration, i.e.,
$\dual{f}:= \int_\Omega f(x)dx$.
  \item The third realization
is for {\it$q$-discrete} functions $u_i:\Km_q\arrow\Km$, where
$\Km_q := q^{\Z}\cup \{0\}$ for $q\neq 0$. Here
$E^mu(x):=u(q^mx)$ and $\dual{f}:= \sum_{n\in\Z}f(q^n)$.
\end{itemize}

All the above realizations lead to the construction of lattice
soliton systems: discrete, continuous and $q$-discrete,
respectively. More realizations can be made by means of the discrete
one-parameter groups of diffeomorphisms \cite{Blaszak9} or by means
of jump operators on time scales \cite{G-G-S,bss}.
Notice that in the continuous limit the algebra of shift operators
on lattices gives the Poisson algebra for \eqref{pb} with $r=1$.
Thus, the quasi-classical limit of lattice soliton systems are respective
dispersionless systems. The situation in the $q$-discrete case is similar \cite{Blaszak9}.

The subspaces $\alg_{\me k-1}$ and $\alg_{< k-1}$ of \eqref{aso} are
Lie subalgebras only for $k=1$ and $k=2$ and the classical
$R$-matrices following from the decomposition of $\alg$ are
\eqref{rp} $R = P_{\me k-1} - \frac{1}{2}$. Their adjoints with respect to the above inner product are given by $R^*= P_{<2-k} - \frac{1}{2}$, respectively.

As a result, we have two Lax hierarchies:
\begin{equation}\label{laxh}
L_{t_n} = \brac{\bra{L^{\frac{n}{N}}}_{\me k-1},L}= \pi_0 dH_n = \pi_1
dH_{n-1}\qquad k=1,2,
\end{equation}
of infinitely many mutually commuting systems. Let \eqref{laxh} be
generated by powers of appropriate Lax operators $L\in \alg$ of the
form
\begin{align}\label{l}
L = u_{N}\e^N + u_{N-1}\e^{N-1} + u_{N-2}\e^{N-2} + u_{N-3}\e^{N-3}
+ ... ,
\end{align}
where $u_{N}=1$ for $k=1$.

The bi-Hamiltonian structure of the Lax hierarchies \eqref{laxh} is
defined by the compatible (for fixed $k$) Poisson tensors given by
the formulas \eqref{pot1},
\begin{align*}
\pi_0 dH = \brac{L,(dH)_{<k-1}}+\bra{\brac{dH,L}}_{< 2-k}
\end{align*}
and
\begin{align}
\nonumber \pi_1 dH &= \frac{1}{2}\bra{\brac{L,\bra{LdH + dHL}_{<k-1}}
+ L\bra{\brac{dH,L}}_{<2-k}+\bra{\brac{dH,L}}_{<2-k}L}\\
\label{sqt} &\qquad\quad + (2-k)\brac{(E+1)(E-1)^{-1}\ \free
\bra{\brac{dH,L}},L},
\end{align}
where the operation $(E-1)^{-1}$ is the formal inverse of $(E-1)$.
The second Poisson tensor is a Dirac reduction of \eqref{pot2b} as in this case
$\widetilde{R} = R + (k-\frac{3}{2}P_0)$ satisfies the related
condition.

The differentials $dH(L)$ of functionals $H(L)\in \smf(\alg)$ for
\eqref{l} have the form $dH = \sum_{i=2-k}^\infty\e^{i-N}\var{H}{u_{N-i}}$ and
the respective Hamiltonians \eqref{casq} are
\begin{equation*}
  H_n(L) = \frac{N}{N+n}\tr\bra{L^{\frac{n}{N}+1}}\qquad N\neq -n.
\end{equation*}

The theory of lattice soliton systems of Toda type with Lax
representations given by means of the shift operators was introduced
for the first time by B. Kupershmidt in \cite{kup}. This class of
systems, follows from the classical $R$-matrix formalism applied to
the algebra of shift operators, was investigated from this point
of view in \cite{Blaszak1} and \cite{Oevel4}.

\begin{example}
Consider the case of $k=1$ with \eqref{l} normalized as $L = \e +
\sum_{i=0}^\infty u_{i}\e^{-i}$. The first chain from the Lax
hierarchy \eqref{laxh} has the form
\begin{align}
  L_{t_1} = \brac{\bra{L}_{\me 0},L}=\pi_0dH_1=\pi_1dH_0\iff
(u_i)_{t_1} = (E-1)u_{i+1}+u_i(1-E^{-i})u_0,
\end{align}
and its explicit bi-Hamiltonian structure is given by
\begin{align*}
\pi_0^{ij} &= E^ju_{i+j}-u_{i+j}E^{-i}\\
\pi_1^{ij} &=
\sum_{k=0}^{i}\brac{u_kE^{j-k}u_{i+j-k}-u_{i+j-k}E^{k-i}u_k +
u_i\bra{E^{j-k}-E^{-k}}u_j}\\
&\qquad + u_i\bra{1-E^{j-i}}u_j +
E^{j+1}u_{i+j+1}-u_{i+j+1}E^{-i-1}
\end{align*}
together with the Hamiltonians $H_0 = \dual{u_0}$ and $H_1 = \dual{u_1+\frac{1}{2}u_0^2}$.
\end{example}

Assume now that the dynamical fields depend on an additional variable $y\in\Si$. Then after the central
extension procedure with the Maurer-Cartan two-cocycle \eqref{mc}
the (2+1)-dimensional Lax hierarchy takes  the form
\eqref{ceh}, \cite{Blaszak3},
\begin{equation*}
    L_{t_n} = \brac{\bra{dC_n}_{\me k-1},L-\alpha\,\pr_y} = \pi_0 dC_n\qquad k=1,2,
\end{equation*}
for the Lax operator \eqref{l} with $N>1$,
where $dC_n =\sum_{i=0}^\infty a_{n-i}\e^{n-i}$
are solutions of \eqref{novlax}, $\brac{dC_n,L-\alpha\,\pr_y} = 0$.
The Poisson tensor \eqref{ept2} is given by
\begin{align*}
\pi_0 dH =
\brac{L-\alpha\,\pr_y,(dH)_{<k-1}}+\bra{\brac{dH,L-\alpha\,\pr_y}}_{<
2-k},
\end{align*}
and the Dirac reduction is not required.

\begin{example}
The case of $k=1$. An example of finite field reduction. The Lax
operator is given by $L = \e + u + v\e^{-1}$.
Then for $(dC_1)_{\me 0} = \e + u$  we have
\begin{align*}
L_{t_1} = \brac{(dC_1)_{\me 0},L-\alpha\pr_y} = \pi_0 dH_1 = \pi_1
dH_0\iff \pmatrx{cc}{u\\ v}_{t_1} = \pmatrx{cc}{\bra{E-1}v + \alpha u_y\\
v\bra{1-E^{-1}}u}.
\end{align*}
The respective Poisson tensors are
\begin{equation*}
 \pi_0 = \pmatrx{cc}{\alpha\pr_y & \bra{E-1}v\\
v\bra{1-E^{-1}} & 0}\ {\rm and, if}\,\alpha=0,\ \pi_1 = \pmatrx{cc}{ Ev -vE^{-1} & u\bra{E-1}v\\
v\bra{1-E^{-1}}u & v\bra{E-E^{-1}}v}.
\end{equation*}
The Hamiltonians are $H_1 = \dual{v+\frac{1}{2}u^2}$ and $H_0 = \dual{u}$.
\end{example}

\subsection{Algebra of $\delta$-pseudo-differential operators}

Consider a generalized derivative in the algebra $\F$ of dynamical
fields with values in $\Km$ given by a linear map $\Delta:\F\arrow\F$
that satisfies the generalized Leibniz rule
\begin{equation}\label{gder}
    \Delta(uv) = \Delta(u)v + E(u)\Delta
\end{equation}
for an algebra endomorphism $E:\F\arrow\F$. If $E=1$,
then $\Delta$ is an ordinary derivative. According to \eqref{gder}
we define a generalized differential operator
\begin{equation}\label{delta}
    \delta u = \Delta(u) + E(u)\delta.
\end{equation}
In \eqref{delta} and in all subsequent expressions $\Delta$ and $E$
act only on the nearest $\Delta$-smooth function on their right-hand side.
As above, the expression \eqref{delta} is a counterpart of an
ordinary differential operator $\pr$ such that $\pr u = u_x +u\pr$.
Using \eqref{delta} we have
\begin{equation*}
  \delta^{-1}u = E^{-1}u\delta^{-1} + \delta^{-1}\Delta^\dag u\delta^{-1}
= E^{-1}u\delta^{-1} + E^{-1}\Delta^\dag u\delta^{-2}+ E^{-1}{\Delta^\dag}^2 u\delta^{-3} + \ldots ,
\end{equation*}
where $\Delta^\dag := -\Delta E^{-1}$.

Hence, we can further define an algebra of
$\delta$-pseudo-differential operators
\begin{equation}\label{algebra}
 \alg= \alg_{\me k}\oplus \alg_{< k} = \pobr{\sum_{i\me k}u_{i}\delta^{i}}\oplus
 \pobr{\sum_{i<k}u_{i}\delta^{i}},
 \end{equation}
where $u_i\in\F$. The above algebra is noncommutative and
associative. In fact we have the following result.

\begin{proposition}
The algebra \eqref{algebra} generated by the rule of the form
\eqref{delta} is associative if and only if $\Delta:\F\arrow\F$
satisfies \eqref{gder} and $E:\F\arrow\F$ is an algebra
endomorphism.
\end{proposition}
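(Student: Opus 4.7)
The strategy is that associativity of an algebra presented by generators and defining relations can be verified on triples of generators; any longer identity then follows by iterated reassociation. Here the coefficient algebra $\F$ is associative by assumption, and the only substantive generator beyond $\F$ is $\delta$ (and, for the full $\delta$-pseudo-differential algebra, $\delta^{-1}$), subject to the single rule \eqref{delta}. Thus the decisive test is comparing the two ways of bracketing $\delta\cdot u\cdot v$ with $u,v\in\F$.

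The key computation is: applying \eqref{delta} twice on the left and once on the right,
\begin{align*}
(\delta\,u)\,v &= \bigl(\Delta(u)+E(u)\delta\bigr)v = \Delta(u)\,v + E(u)\Delta(v) + E(u)E(v)\,\delta,\\
\delta\,(uv) &= \Delta(uv) + E(uv)\,\delta.
\end{align*}
Since a well-defined element of $\alg$ has unique coefficients in each power of $\delta$, the requirement $(\delta u)v = \delta(uv)$ for all $u,v\in\F$ is equivalent to the simultaneous identities
\begin{equation*}
\Delta(uv)=\Delta(u)\,v+E(u)\Delta(v)\qquad\text{and}\qquad E(uv)=E(u)E(v),
\end{equation*}
i.e.\ exactly \eqref{gder} together with the endomorphism property of $E$. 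This immediately gives the only-if direction; for the if direction, the same computation shows that the base-case associativity on $(\delta, u, v)$ holds.

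It remains to upgrade this base case to the full algebra. Associativity of triples of the form $(u, \delta, v)$ and $(u, v, \delta)$ with $u,v\in\F$ is trivial, and associativity of $(\delta, u, \delta)$ and more generally of $(\delta^m, u, \delta^n)$ for $m, n \me 0$ is obtained by an induction on $m+n$ that repeatedly uses the base relation \eqref{delta}. For negative exponents one uses the explicit expansion of $\delta^{-1}u$ derived from $\delta^{-1}(\delta u)=u$ (as written just above \eqref{algebra}), which requires invertibility of $E$ and then satisfies the analogous induction via $\Delta^{\dag}=-\Delta E^{-1}$.

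The main obstacle is not the two-line coefficient comparison above, but the bookkeeping in this induction: one must check that the formal series defining $\delta^{-1}u$ is a well-defined element of $\alg$ whose product rule with elements of $\F$ is consistent with the product rule for $\delta$, and that no new constraint on $\Delta$ or $E$ arises at higher order. This works out because at every step the only relations invoked are the generalized Leibniz rule \eqref{gder} and the multiplicativity of $E$, both of which are already supplied by the base-case analysis.
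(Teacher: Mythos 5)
Your proposal is correct and takes essentially the same route as the paper, which omits the details but states that the proof goes by induction and that it suffices to check associativity on triples of the form $A=\delta^i$, $B=b\delta^j$, $C=c$ with $b,c\in\F$. Your explicit comparison of $(\delta u)v$ with $\delta(uv)$ is precisely the base case that extracts \eqref{gder} together with the endomorphism property of $E$, and your inductive extension to higher and negative powers of $\delta$ matches the paper's sketch.
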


We omit the proof, which can be done by induction, and it suffices
to consider the associativity condition, $(AB)C = A(BC)$, only for
$A=\delta^i$, $B=b\delta^j$ and $C=c$, where $a,b,c\in\F$.

In fact we will consider only two most important special cases of
generalized derivatives and $\delta$-pseudo-differential operators:
\begin{itemize}
  \item The first case is the ordinary derivative, i.e., \eqref{gder} with $E=1$. Thus let $\F$ consist
  of smooth functions $u_i:\Omega\arrow\Km$ of $x$, where $\Omega = \Si$ or
  $\Omega = \Rm$ (if $u_i$ belong to the Schwartz space). Thus $\Delta = \pr_x$, and we let $\delta := \pr$. In this case $\alg$ is the algebra of standard pseudo-differential
operators with the trace form given by
\begin{equation}\label{str1}
 \tr A = \dual{\res A} = \int_\Omega \res A\,dx\qquad A=\sum_ia_i\pr^{i}\in\alg, \quad \res A:= a_{-1}.
\end{equation}
  \item The second case is the generalized derivative given by a difference operator.
Thus, let
\begin{equation*}
    \Delta = \frac{1}{\mu}\bra{E-1}\quad\Longrightarrow\quad \delta = \frac{1}{\mu}\bra{\e-1},
\end{equation*}
where $\mu$ is constant, $E$ is an algebra endomorphism and $\e$ is a
shift operator such that \eqref{smult} holds, see the realizations from the previous
section. In particular, in the continuous lattice case:
\begin{equation*}
   \Delta f(x)= \frac{f(x+\hk)-f(x)}{\hk},\quad Ef(x)=f(x+\hk),\quad \mu=\hk;
\end{equation*}
and in the $q$-deformed case:
\begin{equation*}
     \Delta f(x)= \frac{f(qx)-f(x)}{q-1},\quad Ef(x)=f(qx),\quad \mu=q-1.
\end{equation*}
Here the trace form is exactly the same as in the case of shift operators algebra
from previous section \eqref{str}. Thus, we have to consider the
restriction \eqref{rest} again. Notice that
$\delta$-pseudo-differential operators can be represented uniquely by shift
operators, with the convention that the $\delta$-operators of negative orders are
expanded into shift operators of negative orders as well. Note that in this
case $\Delta^\dag = -\Delta E^{-1}$.
\end{itemize}

In the quasi-classical limit the algebra of standard pseudo-differential operators
gives the Poisson algebra with canonical Poisson bracket \eqref{pb} for $r=0$.
Thus, in dispersionless limit continuous soliton systems yields respective
dispersionless systems. In the case when difference operator $\Delta$ is given on lattices, the continuous limit of the algebra of pseudo-$\delta$-differential operators is actually given by the algebra of standard pseudo-differential operators. Thus, in contrary to the previous case, the quasi-classical limit
of discrete soliton systems yields continuous soliton systems. This situation is similar in the $q$-deformed cases \cite{bss}.

The reason why we consider both cases, i.e., continuous and discrete,
is the fact that they can be unified into a single consistent scheme
using the so-called time scales \cite{G-G-S,bss,sbs}.
This scheme also includes soliton systems with spatial variable belonging
to the space being a composition of continuous and discrete intervals.
In \cite{sbs} the trace formula on the algebra of pseudo-$\delta$-differential operators
is unified to the form that covers all the above realizations as well the cases with nonconstant
$\mu$.

\begin{proposition}
The inner product \eqref{sc} on $\alg$ defined by means of traces
\eqref{str1} and \eqref{str} in both cases is nondegenerate,
symmetric and $\ad$-invariant.
\end{proposition}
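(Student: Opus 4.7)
The plan is to verify the three properties in turn and observe that $\ad$-invariance is automatic once the other two are in place. Since multiplication in $\alg$ is associative and the Lie bracket is the commutator, the second part of Lemma \ref{adtr} supplies $\ad$-invariance as soon as we know $\tr(AB) = \tr(BA)$; but this equality is exactly the symmetry of $\bra{A,B}_\alg = \tr(AB)$ and is equivalent to $\tr\brac{A,B} = 0$. The proposition therefore reduces to nondegeneracy and to the vanishing of the trace on commutators, which must be checked separately in the two realizations.

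Nondegeneracy is routine. Given $A = \sum_{i\me N} a_i\delta^i\in\alg$ with $\bra{A,B}_\alg = 0$ for every $B$, I would test against the monomials $B = b\,\delta^{-i-1}$ in the $\pr$-case, and against monomials $B$ whose shift-operator expansion is $b\,\e^{-i}$ in the difference case. In either situation the multiplication rule shows that the distinguished coefficient of $AB$ (namely $\res(AB)$ or $\free(AB)$, respectively) equals $a_i b$ up to correction terms involving $a_j$ with $j > i$; downward induction on $i$ combined with nondegeneracy of $\dual{\cdot}$ on $\F$ then forces every $a_i = 0$.

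The genuine content is the identity $\tr\brac{A,B} = 0$. In the difference realization $\delta = (E-1)/\mu$, every $\delta$-pseudo-differential operator can be rewritten uniquely as a formal shift operator, under which the pairing \eqref{sc} obtained from \eqref{str} coincides with the shift-operator pairing \eqref{dms} already handled in Proposition \ref{sinner}; invoking that proposition delivers symmetry and $\ad$-invariance in one stroke. In the standard case $\delta = \pr$, I would exploit bilinearity to reduce to monomials $A = a\pr^m$, $B = b\pr^n$, expand $AB - BA$ via the generalized Leibniz rule, and establish the classical Adler-type identity that $\res\brac{A,B}$ is a total $x$-derivative of some polynomial in $a$, $b$ and their $x$-derivatives; the integration in \eqref{str1} then annihilates it.

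The main obstacle lies in this last Adler-type identity, because when $m,n<0$ both $AB$ and $BA$ are infinite sums of negative-order terms and one has to match carefully the binomial-style coefficients coming from $\pr^m b$ against those coming from $\pr^n a$ before collecting the surviving terms as a single $\pr_x$-derivative. Once that combinatorial cancellation is checked for arbitrary $(m,n)\in\Z^2$, integration over $\Omega$ gives $\tr\brac{A,B}=0$, which supplies symmetry, and Lemma \ref{adtr} then closes the argument by furnishing $\ad$-invariance from associativity.
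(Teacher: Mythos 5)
Your argument is correct and follows essentially the same route as the paper: nondegeneracy by pairing against monomials, symmetry in the difference case by expanding $\delta$-operators into shift operators and invoking Proposition \ref{sinner}, symmetry in the case $\delta=\pr$ by the Adler-type observation that $\res\brac{A,B}$ is a total $x$-derivative (the paper's ``integration by parts''), and $\ad$-invariance from the second part of Lemma \ref{adtr}. You merely make explicit the logical reduction (symmetry $\iff$ $\tr\brac{A,B}=0$) and the combinatorial content of the residue identity that the paper leaves implicit.
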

\begin{proof}
In the first case of ordinary derivative nondegeneracy is obvious,
the symmetricity follows from integration parts and the assumption
that integrals of total derivatives vanish. The $\ad$-invariance is a
consequence of Lemma \ref{adtr}. In the second case of difference
operators the proof follows from Proposition \ref{sinner} and the
fact that $\delta$-pseudo-differential operators can be expanded by
means of shift operators.
\end{proof}

The Lie structure on \eqref{algebra} is given by the commutator
$[A,B] = AB-BA$, where $A,B\in\alg$. In general the subspaces
$\alg_{\me k}$ and $\alg_{< k}$ are Lie subalgebras of $\alg$ only
for $k=0$ and $k=1$. However, if $E=1$ they are Lie subalgebras also
for $k=2$. The classical $R$-matrices following from the
decomposition of $\alg$ are $R = P_{\me k} - \frac{1}{2}$. Hence, we
have the following Lax hierarchies of commuting evolution equations
\begin{equation}\label{glh}
       L_{t_n} = \brac{\bra{L^{\frac{n}{N}}}_{\me k},L} = \pi_0 dH_n = \pi_1 dH_{n-1}\qquad k=0,1
\end{equation}
and for $k=2$ if $E=1$. In \eqref{glh} $L\in\alg$. The
infinite-field Lax operators generating \eqref{glh} are given in the
form
\begin{align}\label{lo}
 L=
u_N\delta^N+u_{N-1}\delta^{N-1}+u_{N-2}\delta^{N-2}+u_{N-3}\delta^{N-3}+\ldots\quad
N\me1,
\end{align}
where in general for $k=0$ the field $u_N$ is time-independent; if
$E=1$ then for $k=0$ the field $u_{N-1}$ is also time-independent
and for $k=1$ the same for the field $u_N$.

The explicit form of the differentials $dH = \sum_i\delta^{-i-1}\gamma_i$ with respect to general Lax operator \eqref{lo} has to be such that \eqref{eu} is valid. See the Example \ref{ex}. In the case of $E=1$ we have $\gamma_i = \var{H}{u_i}$. The Hamiltonians \eqref{casq} are given by
\begin{equation*}
  H_n(L) = \frac{N}{N+n}\tr\bra{L^{\frac{n}{N}+1}}\qquad N\neq -n.
\end{equation*}

We will consider the Hamiltonian structures of \eqref{glh} only for
$k=0$. Thus, for $k=0$ we have $R^\star = -R$. Hence the linear Poisson
tensor is given by \eqref{pot1},
\begin{align}\label{lin1}
\pi_0 dH = \brac{(dH)_{\me 0},L} - \bra{\brac{dH,L}}_{\me 0}.
\end{align}
The case of quadratic Poisson tensor is more complex. For the pseudo differential
operators, when $E=1$, the quadratic Poisson tensor is given by
\begin{equation}\label{quad2}
\pi_1^{red}dH = \bra{L dH}_{\me 0} L - L \bra{dH L}_{\me 0} + \frac{1}{N}\brac{\pr_x^{-1}\bra{\res \brac{dH,L}},L},
\end{equation}
after Dirac reduction applied to \eqref{pot2b}.
In the case of difference operators ($E\neq 1$), as the
decomposition for $k=0$ of the algebra of
$\delta$-pseudo-differential operators \eqref{algebra} coincides with
the decomposition  of the algebra of purely shift operators
\eqref{aso} for $k=1$, the quadratic Poisson tensor is given by
\eqref{sqt}, see \cite{Oevel4,sbs} for further details.

The pseudo-differential case ($E=1$) of \eqref{glh} with  $k=0$ and
finite-field reductions, given by  constraint $u_i=0$ for $ i<0$, is
the well-known Gelfand-Dickey hierarchy \cite{Gelfand}. More details
of the $R$-matrix formalism applied to the algebra of
pseudo-differential operators for the remaining values of
$k$ can be found in \cite{Konopelchenko} or \cite{Blaszak2}.

\begin{example}
Consider the infinite-field case $k=0$ of \eqref{lo} in the
ordinary derivative case $E=1$, i.e., $L = \pr + \sum_{i=0}^\infty
u_i\pr^{-i-1}$. This is the case of the well-known KP hierarchy~\cite{Dickey}.
Then we find the first nontrivial dispersive chain from \eqref{dlaxh},
\begin{align*}
 L_{t_2} &= \brac{\bra{L^2}_{\me 0},L} = \pi_0 dH_2 = \pi_1 dH_1\iff\\
  &\qquad\quad(u_i)_{t_2} = (u_i)_{2x} + 2(u_{i+1})_x - 2\sum_{k=1}^i(-1)^k\binom{i}{k}u_{i-k}(u_0)_{kx},
\end{align*}
where $(L^2)_{\me 0} = \pr^2 + 2u_0$. The bi-Hamiltonian structure is given by linear
\begin{align*}
\pi_0^{ij} = \sum_{k=1}^{j}\binom{j}{k}\pr_x^ku_{i+j-k}-\sum_{k=1}^{i}(-1)^{k}\binom{i}{k}u_{i+j-k}\pr_x^k
\end{align*}
and quadratic Poisson tensor
\begin{align*}
  \pi_1^{ij} &= \sum_{k=1}^{j+1}\binom{j+1}{k}\pr_x^ku_{i+j-k+1}-\sum_{k=1}^{i+1}(-1)^k\binom{i+1}{k}u_{i+j-k+1}\pr_x^k\\
  &\qquad +\sum_{l=0}^{j-1}\brac{\sum_{k=1}^{j-l-1}\binom{j-l-1}{k}u_l\pr_x^ku_{i+j-k-l-1}-\sum_{k=1}^{i}(-1)^{k}\binom{i}{k}u_{i-k+l}\pr_x^ku_{j-l-1}}\\
  &\qquad -\sum_{l=0}^{j-1}\sum_{k=0}^{i}\sum_{s=1}^{j-l-1}(-1)^k\binom{i}{k}\binom{j-l-1}{s}u_{i-k+l}\pr_x^{k+s}u_{j-l-s-1},
\end{align*}
that are obtained from \eqref{lin1} and \eqref{quad2}, respectively. The
respective Hamiltonians are
\begin{align*}
    H_1 = \int_{-\infty}^\infty u_1\,dx\qquad H_2 = \int_{-\infty}^\infty \bra{u_0^2+u_2} dx.
\end{align*}
\end{example}

\begin{example}\label{ex}
The case of $k=0$. In this example we present the associated integrable
systems and their Hamiltonian structures in the form that is valid
in both continuous and discrete cases. In the first case, $E=1$,
$\mu=0$ and $\Delta = -\Delta^\dag = \pr_x$.

Let the Lax operator be given in the form
\begin{equation}\label{a0}
    L = \delta + \mu \psi\varphi + \psi\delta^{-1}\varphi.
\end{equation}
Then the first and the second flows from the Lax hierarchy
\eqref{laxh} are
\begin{equation}\label{a1}
\begin{split}
    \psi_{t_1} &= \mu\psi^2\varphi+\Delta\psi,\\
    \varphi_{t_1} &= -\mu\varphi^2\psi - \Delta^\dag\varphi.
\end{split}
\end{equation}
and
\begin{equation}\label{a2}
 \begin{split}
 \psi_{t_2} &= \mu^2\psi^3\varphi^2 + 2\psi^2\varphi + \Delta^2\psi + \Delta\bra{\mu\psi^2\varphi}
    + 2\mu\psi\varphi\Delta\psi + \mu\psi^2\Delta^\dag\varphi\\
    \varphi_{t_2} &= -\mu^2\psi^2\varphi^3 - 2\psi\varphi^2 - {\Delta^\dag}^2\varphi
    - \Delta^\dag\bra{\mu\psi\varphi^2} - \mu\varphi^2\Delta\psi - 2\mu\psi\varphi\Delta^\dag\varphi .
 \end{split}
\end{equation}

For the Lax operator \eqref{a0} the differential of a functional $H$
such that \eqref{eu} is valid is given by
\begin{equation*}
    dH = \frac{1}{\varphi}\var{H}{\psi} - \frac{1}{\psi}\Delta^\dag\bra{\frac{1}{\varphi}}
    \Delta^{-1}A - \delta\frac{1}{\psi\varphi + \mu\psi\Delta^\dag\varphi}\Delta^{-1}A,
\end{equation*}
where $A = \psi\var{H}{\psi} - \varphi\var{H}{\varphi}$, and $\Delta^{-1}$ is a formal inverse of $\Delta$. The linear and quadratic Poisson tensors take the form \cite{sbs},
\begin{equation*}
    \pi_0 = \pmatrx{cc}{0 & 1\\ -1 & 0},\quad
    \pi_1 =
    \pmatrx{cc}{-\mu\psi^2 - 2\psi\Delta^{-1}\psi & \Delta + 2\mu\psi\varphi + 2\psi\Delta^{-1}\varphi\\
     -\Delta^\dag + 2\varphi\Delta^{-1}\psi &  -\mu\varphi^2 - 2\varphi\Delta^{-1}\varphi },
\end{equation*}
while the Hamiltonians are
\begin{align*}
    H_0 &= \dual{\psi\varphi},\quad H_1 = \dual{\frac{1}{2}\mu\psi^2\varphi^2 + \varphi\Delta\psi},\\
    H_2 &= \dual{\frac{1}{3}\mu^2\psi^3\varphi^3 + \psi^2\varphi^2 + \varphi\Delta^2\psi
       + \mu\psi\varphi^2\Delta\psi + \mu\psi^2\varphi\Delta^\dag\varphi}.
\end{align*}

In particular, when $E=1$, $\mu=0$ and $\Delta = \pr_x$ the above
bi-Hamiltonian hierarchy is precisely the bi-Hamiltonian field soliton
AKNS hierarchy \cite{Oevel2}. In this case the first nontrivial flow
is the second one \eqref{a2}, i.e., the AKNS system. When $\Delta$ is the difference
operator, we obtain in particular the lattice \cite{Oevel4} and the $q$-discrete counterparts of
the AKNS hierarchy, where the first nontrivial flow is \eqref{a1}.

In fact this example is more general and also includes more complex situations
when $\mu$ is non-constant (time-independent) function on $\Rm$, for the details we send
the reader to~\cite{sbs}.
\end{example}

Let $\alg$ be the algebra of pseudo-differential operators with
the coefficients depending on independent spatial variables $x$ and $y$.
The central extension procedure with \eqref{mc}
yields the following Lax hierarchy \eqref{ceh}\cite{Blaszak3}
\begin{equation*}
    L_{t_n} = \brac{\bra{dC_n}_{\me k},L-\alpha\pr_y} = \pi_0 dC_n,
\end{equation*}
where $k=1,2$ or $k=3$ and $L = u_N\pr^N + u_{N-1}\pr^{N-1} + u_{N-2}\pr^{N-2} + \ldots$,
with $u_N=1,u_{N-1}=0$ for $k=0$ and only $u_N=1$ if $k=1$. The Lax
hierarchies are generated by $dC_n = \sum_{i=0}^\infty
a_{n-i}\pr^{n-i}$ solving \eqref{novlax}, $\brac{dC_n,L-\alpha\pr_y} = 0$.
The associated Poisson tensor \eqref{ept2} is given by
\begin{align*}
\pi_0 dH = \brac{(dH)_{\me 0},L-\alpha\pr_y} -
\bra{\brac{dH,L-\alpha\pr_y}}_{\me 0},
\end{align*}
and requires no Dirac reduction.

\begin{example}
The case of $k=0$ with the Lax operator of the form $L = \pr^2 + u$.
Then, for $(dC_3)_{\me 0} = \pr^3 + \frac{3}{2}u\pr + \frac{3}{4}\bra{u_x + \alpha \pr_x^{-1}u_y}$
we obtain the (2+1)-dimensional KP equation
\begin{align*}
  L_{t_3} = \brac{(dC_3)_{\me 0},L-\alpha \pr_y}=\pi_0 dH_3 = \pi_1
dH_1\iff u_{t_3} = \frac{1}{4}u_{xxx} + \frac{3}{2}uu_x +
\frac{3}{4}\alpha^2\pr_x^{-1}u_y,
\end{align*}
where the Poisson tensors are $\pi_0 = 2\pr_x$ and, if $\alpha=0$,
$\pi_1 = \frac{1}{2}\pr_x^3+\pr_xu+u\pr_x$. The Hamiltonians are
\begin{equation*}
H_1 = \iint_{\Omega \times \Si}\frac{1}{4}u^2\,dxdy\quad H_3 =
\iint_{\Omega\times\Si}\frac{1}{16}\bra{2u^3+uu_{xx}+3\alpha^2u\pr_x^{-2}u_{yy}}dxdy.
\end{equation*}
\end{example}

More examples of $(2+1)$-dimensional field systems can be found in \cite{Blaszak3,Szablikowski2}, where one can find also systems that are purely $(2+1)$-dimensional phenomena.

\subsection{$sl(2,\Cm)$ loop algebra}\label{sl2}

We will follow the scheme from Section \ref{la}. Consider a
loop algebra over the classical Lie algebra $\alg=sl(2,\Cm)$ of
traceless $2\times2$ nonsingular matrices, i.e., $\alg^\la =
sl(2,\Cm)\bc{\la,\la^{-1}}$, with coefficients being smooth
dynamical functions of variable $x\in\Omega$. The space
$u_i:\Omega\arrow\Km$ are smooth functions of $x$, where $\Omega =
\Si$ if we assume these functions to be periodic in $x$ or $\Omega =
\Rm$ if these functions belong to the Schwartz space. The commutator
defines the Lie bracket in $sl(2,\Cm)$ and readily extends to
$\alg^\la$ since \eqref{liel}.

We already know that there are two natural decompositions of
$\alg^\la$ into Lie subalgebras \eqref{ls}, and we consider only the
one for $k=0$ yielding the classical $R$-matrix $R= P_+ - \frac{1}{2}$,
with $P_+$ being the projection onto the nonnegative powers of $\la$.

The trace form on $\alg^\la$ is  \eqref{ltr}, $\Tr (a) = \int_\Omega \res\,\tr (a)\, dx$,
where $\res \sum_i a_i\la^i = a_{-1}$ and $\tr$ is the standard trace
of matrices. As the matrices in $sl(2,\Cm)$ are nonsingular, the
trace $\tr$ defines nondegenerate inner product which is also symmetric
and $\ad$-invariant. These properties extend to $\alg^\la$, and
hence $\alg^{*\la}\cong\alg^\la$ and $\ad^*\equiv \ad$.

Applying the central extension procedure with the Maurer-Cartan
two-cocycle \eqref{mc} ($x\equiv y$) yields the Lax hierarchy given by
\eqref{llh} (we take $\alpha=1$), i.e.,
\begin{equation}\label{llh2}
  L_{t_n} = \brac{\bra{dC_n}_+,L-\pr_x} = \ldots = \pi_{l}dC_{n-l}= \ldots
\end{equation}
for the Lax operators
\begin{equation}\label{llo}
   L = \bu_N\la^N + \bu_{N-1}\la^{N-1} + \ldots + \bu_{1-m}\la^{1-m} + \bu_{-m}\la^{-m}\quad N\me -1,
\end{equation}
where $\bu_i\in sl(2,\Cm)$ and $\bu_N$ is a constant matrix. The Lax
hierarchy \eqref{llh2} is generated by $dC_n = \la^n dC_0$ such
that $dC_0 = \sum_{i=0}^\infty a_i\la^i$, where $a_i\in sl(2,\Cm)$,
satisfies $\brac{dC_0,L-\pr_x} = 0$.

For a given functional $H\in\smf{\alg^\la}$ of \eqref{llo} its
differential has the form
\begin{equation*}
    L = \var{H}{\bu_{-m}}\la^{m-1} + \var{H}{\bu_{1-m}}\la^{m-2} + \ldots + \var{H}{\bu_{N-1}}\la^{-N},
\end{equation*}
where
\begin{equation*}
    \var{H}{\bu} = \pmatrx{cc}{\frac{1}{2}\var{H}{u_{11}} & \var{H}{u_{21}}\\[1mm] \var{H}{u_{12}} &
    -\frac{1}{2}\var{H}{u_{11}}}\quad{\rm for}\quad \bu = \pmatrx{cc}{u_{11} & u_{12}\\ u_{21} & -u_{11}}.
\end{equation*}
The multi-Hamiltonian structure for \eqref{llh2} is given by \eqref{pt1}
\begin{equation}\label{pt2}
    \pi_{l} dH = \brac{\bra{\la^l dH}_+,L-\pr_x} - \la^l\brac{dH,L-\pr_x}_+.
\end{equation}
For the general Lax operators \eqref{llo}, if $N\me l\me -m$ for
$N\me 0$ and  $0\me l\me -m$ for $N=-1$ then the Dirac reduction of
\eqref{pt2} is not required.

\begin{example}
Consider the Lax operator of the form
\begin{equation}\label{ak}
L= \pmatrx{cc}{-i & 0 \\ 0 & i}\la + \pmatrx{cc}{0 & q \\ r & 0},\quad
i=\sqrt{-1}.
\end{equation}
Then we find that
\begin{equation*}
     dC_0 = \pmatrx{cc}{-i & 0\\ 0 & i} + \pmatrx{cc}{0 & q\\ r & 0}\la^{-1} +
    \pmatrx{cc}{-\frac{i}{2}rq & \frac{i}{2}q_x\\ -\frac{i}{2}r_x & \frac{i}{2}rq}\la^{-2}
 + \ldots .
\end{equation*}
Hence, from \eqref{llh2} we obtain the Ablowitz-Kaup-Newell-Segur
(AKNS) hierarchy \cite{abl}
\begin{equation*}
   \pmatrx{cc}{q\\r}_{t_1} = \pmatrx{cc}{q_x\\ r_x},\quad \pmatrx{cc}{q \\r}_{t_2} = \pmatrx{cc}{\frac{i}{2}q_{xx}-irq^2\\ -\frac{i}{2}r_{xx}+ir^2q},\quad
   \pmatrx{cc}{q \\r}_{t_3} = \pmatrx{cc}{-\frac{1}{4}q_{xxx}+\frac{3}{2}rqq_x\\ -\frac{1}{4}r_{xxx}+\frac{3}{2}rqr_x},\ \ldots\ .
\end{equation*}

The bi-Hamiltonian structure is given by the Poisson tensors \eqref{pt2}
\begin{align*}
   \pi_0 = \pmatrx{cc}{0 & -2i\\ 2i & 0}\qquad \pi^{red}_1 = \pmatrx{cc}{2q\pr_x^{-1}q & \pr_x-2q\pr_x^{-1}r\\ \pr_x-2r\pr_x^{-1}q & 2r\pr_x^{-1}r}
\end{align*}
with the hierarchy of Hamiltonians
\begin{equation*}
   H_1 = \int_\Omega \frac{i}{2}q_xr\,dx,\ H_2 = \int_\Omega \frac{1}{4}\bra{q^2r^2-q_{xx}r}dx,\
   H_3 = \int_\Omega \frac{i}{8}\bra{3qq_xr^2-q_{xxx}r}dx,\ \ldots\ .
\end{equation*}
Notice that the need for the Dirac reduction for the Poisson
tensor $\pi^{red}_1$ follows from the fact that \eqref{ak} is not
the most general Lax operator \eqref{llo} from $\alg^\la$ with $N=1$
and $m=0$.

The reduction  $q=\psi,r=\psi^\star$ of the above AKNS hierarchy
yields the non-linear Schr\"odinger hierarchy, while the reductions
$q=iu,r=i$ and $q=r=iv$ give rise to the KdV and the mKdV hierarchies respectively.
\end{example}

\section*{Acknowledgement}

The work was partially supported by Polish MNiSW research grant
no. N N202 404933. B.Sz. was supported (starting from 1 October 2008) by the European Community
under a Marie Curie Intra-European Fellowship, contract no. PIEF-GA-2008-221624.

\footnotesize

\end{document}